\theoremstyle{plain}
\newtheorem{theorem}{Theorem}
\newtheorem{lemma}{Lemma}
\theoremstyle{definition} 
\newtheorem{definition}{Definition}
\theoremstyle{definition}
\newtheorem{remark}{Remark}
\theoremstyle{remark}
\newtheorem*{rationale}{Rationale behind the proof}
\def\ps@pprintTitle{%
   \let\@oddhead\@empty
   \let\@evenhead\@empty
   \let\@oddfoot\@empty
   \let\@evenfoot\@oddfoot
}
\begin{document}

\thispagestyle{empty}

\begin{minipage}[t][5cm][t]{1.\textwidth}

{\Large

This is a preprint of the paper:

~

\noindent Daniel Lerch-Hostalot, David Meg\'{\i}as, ``Unsupervised steganalysis based on artificial training sets'', \emph{Engineering Applications of Artificial Intelligence}, Volume 50, April 2016, Pages 45-59. ISSN: 0952-1976. \url{http://dx.doi.org/10.1016/j.engappai.2015.12.013}.

}

\end{minipage}

\newpage
\setcounter{page}{1}

\begin{frontmatter}

\title{Unsupervised Steganalysis Based on Artificial Training Sets}

\author[Barcelona]{Daniel Lerch-Hostalot}\ead{dlerch@uoc.edu} 
\author[Barcelona]{David Meg\'{\i}as}\ead{dmegias@uoc.edu}

\address[Barcelona]{Estudis d'Inform\`{a}tica Multim\`{e}dia i Telecomunicaci\'{o}, Internet Interdisciplinary Institute (IN3), Universitat Oberta de Catalunya, Av. Carl Friedrich Gauss, 5, 08660 Castelldefels, Catalonia, Spain.} 

\begin{keyword}     
Unsupervised steganalysis, Cover source mismatch, Machine learning
\end{keyword}

\begin{abstract}
In this paper,  an unsupervised steganalysis method that combines artificial training sets and supervised classification is proposed. \textcolor{black}{We provide a formal framework for unsupervised classification of stego and cover images in the typical situation of targeted steganalysis (i.e., for a known algorithm and approximate embedding bit rate).} We also present a complete set of experiments using \textcolor{black}{1)} eight different image databases, \textcolor{black}{2)} image features based on Rich Models, and \textcolor{black}{3)} three different embedding algorithms: \textcolor{black}{Least Significant Bit (LSB) matching, Highly undetectable steganography (HUGO) and Wavelet Obtained Weights (WOW).} We show that the experimental results outperform \textcolor{black}{previous} methods based on Rich Models in the majority of the tested cases. At the same time, the proposed approach bypasses the problem of Cover Source Mismatch --when the embedding algorithm and bit rate are known--, since it removes the need of a training database when we have a large enough testing set. Furthermore, we provide a generic proof of the proposed framework in the machine learning context. Hence, the results of this paper could be extended to other classification problems similar to steganalysis.
\end{abstract}

\end{frontmatter}

\section{Introduction}
Data hiding is a collection of techniques to embed secret data into digital media. These techniques can be used in many different application scenarios, such as secret communications, copyright protection or authentication of digital contents, among others. \textcolor{black}{Images are the most common carriers for data hiding because of their widespread use in the Internet.}

\textcolor{black}{Within data hiding, steganography is a major branch whose goal is to secretly communicate data, making it undetectable for an attacker.}  On the other hand, \textcolor{black}{steganalysis is another branch whose goal is to detect messages previously hidden using steganography.}

Many of the image steganalysis methods in the state of the art \citep{Pevny:2009,Fridrich:2012} use feature-based  steganalysis and machine learning classification. \textcolor{black}{In order to} apply this methodology, the steganalyst needs to extract a set of features from a training data set and train a classifier. Then, the classifier is tested using a testing data set and, if the results are satisfactory, the classifier is considered successful.

This approach is widely adopted in classification tasks. The classifier is trained with a specific data set and, consequently, its classification capabilities \textcolor{black}{usually} decrease as the testing data set differs from the training data. As a result, this methodology is not fully effective when used in real-world scenarios. 

The data set obtained after feature extraction depends on many factors, such as the steganographic algorithm used for hiding  data into the cover source, the algorithm used for feature extraction or the properties of the cover source in different aspects \textcolor{black}{(e.g. size, noise and hardware used for acquisition).} \textcolor{black}{If similar cover source is used,} the feature extraction process provides data sets with similar representation \textcolor{black}{and, therefore,} the machine learning tools work properly and the classification results are satisfactory. \textcolor{black}{However, if different cover source is used,} the data sets obtained by feature extraction are \textcolor{black}{also different,} producing a degradation of the classification results. Machine learning \citep{Mitchell:1997, Bishop:2006} literature refers to this problem as \emph{domain adaptation}, \textcolor{black}{whereas the term used to refer to this situation in steganalysis is
\emph{cover source mismatch (CSM)}. This constitutes} an important open problem in the field \citep{Ker:2013}, which was initially reported in  \citep{Cancelli:2008}.

Several approaches to deal with the CSM problem have been proposed in the recent years. In the BOSS competition \citep{BOSS}, the BOSSrank database (\textcolor{black}{which suffers from} CSM) had to be used as a testing set. Some participants of the competition tried to include the testing set images in the training set \citep{Kurugollu:2011,Fridrich:2011}. This idea was called ``training on a contaminated database'' \citep{Fridrich:2011}. This approach consists in applying denoising algorithms to estimate the cover sources of the testing set and using these estimated covers to generate new stego samples, by embedding new information into them. After that, these new estimated cover and stego samples are included in the training set. 

In 2012, a solution based on training a classifier with a huge variety of images was proposed \citep{Lubenko:2012}. \textcolor{black}{This} approach consists in applying machine learning \textcolor{black}{to} millions of images. \textcolor{black}{Due to the high time and memory requests, this step is performed using on-line classifiers.} Later on, in 2013, the use of rich features in universal steganalysis was analyzed \citep{Pevny:2013}. Since rich features are not sensitive enough for their application in universal steganalysis, the authors apply linear projections informed by embedding methods and an anomaly detector. This approach tries to make these projections sensitive to stego content and, at the same time, insensitive to cover variation.

In 2014, different methods to deal with CSM were presented. \cite{Ker:2014} show the possibility of centering features when there is a shift in the cover sources, by subtracting an estimated centroid of the cover features. \citeauthor{Ker:2014} \textcolor{black}{also use} weighted ensemble methods to deal with situations in which the features are moving in different directions after embedding. \cite{Fridrich:2014} present three different strategies to deal with CSM. The first one consists in training with a mixture of different cover sources. The second approach uses different classifiers trained on different sources and, in the testing step, the testing set is classified using the closest source. The third strategy is similar to the second one, but testing each image separately using the closest source. In 2014, another approach, based on Ensemble Classifiers with Feature Selection (EC-FS) \citep{Chaumont:2012}, was proposed by \cite{Pasquet:2014}. In this new method, \citeauthor{Pasquet:2014} use the EC-FS classifier with the \emph{Islet approach}, a pre-processing step that consists in organizing images in clusters and assigning a steganalyzer to each cluster. Using this technique, a classifier can manage the diversity of the images more easily, after learning with a set of close feature vectors (in each cluster, the distance between the feature vectors is relatively small). This \textcolor{black}{allows reducing} the number of required images from millions  to a few thousands. 

In this paper we present a new approach based on bypassing the CSM problem rather than addressing it. \textcolor{black}{The proposed} technique consists in creating  an ``artificial'' training set from the testing set. This artificial training set is formed by applying the targeted steganographic algorithm to the testing data (the data set $A$) twice. If the testing set $A$ is formed by stego and cover images, a first application of the steganographic algorithm results in a ``transformed'' set $B$ with ``double stego'' and stego images. The second application of the steganographic algorithm produces a ``double transformed'' set $C$ that includes ``triple stego'' and ``double stego'' images. We show how the sets $A$ and $C$ can be used as artificial training data to finally classify the set $B$ into stego and ``double stego'' images. Since there is a bijection between the elements of $A$ and $B$, this is equivalent to the classification of the images in $A$ as cover or stego.

\textcolor{black}{The idea behind the proposal is} that part of the images that we want to classify --the cover images-- can be transformed into images that belong to the other class that we want to classify: the class of stego images. This fact is exploited to create an artificial training set that is used \textcolor{black}{to find} a boundary between classes with \textcolor{black}{remarkable accuracy.} This classification technique can thus have a relevant impact in the way in which steganalysis is usually approached, since it \textcolor{black}{allows classifying} the images without a real training set, which constitutes the direct cause of the CSM problem. 

\textcolor{black}{In this paper, we provide results for three different steganographic methods, namely, Least Significant Bit (LSB) matching (LSB matching) \citep{Mielikain:2006}, Highly undetectable steganography (HUGO) \citep{Pevny:2010b} and Wavelet Obtained Weights (WOW) \citep{Holub:2012}. Nevertheless, the proposed method is general and can be applied to any steganographic system, such as the more recent methods suggested by \cite{Karakis:2015}.}

The rest of this paper is organized as follows. Section \ref{sec:ATS_method} presents the proposed method, which is formalised in Section \ref{sec:TheoreticalAnalysis}. Section \ref{sec:Experimental} presents the experimental results obtained using the proposed method for \textcolor{black}{eight} different image databases in a cross-domain environment. Finally, Section \ref{sec:Conclusion} summarizes the conclusions of this work and suggests some directions for further research.

\section{Proposed Method}
\label{sec:ATS_method}

This section presents a description of the method proposed in the paper.

\begin{figure}[ht!]
  \subfloat[Original samples: data set $A$]{\label{fig:MEa}
  \ifpdf \includegraphics[width=0.47\textwidth] {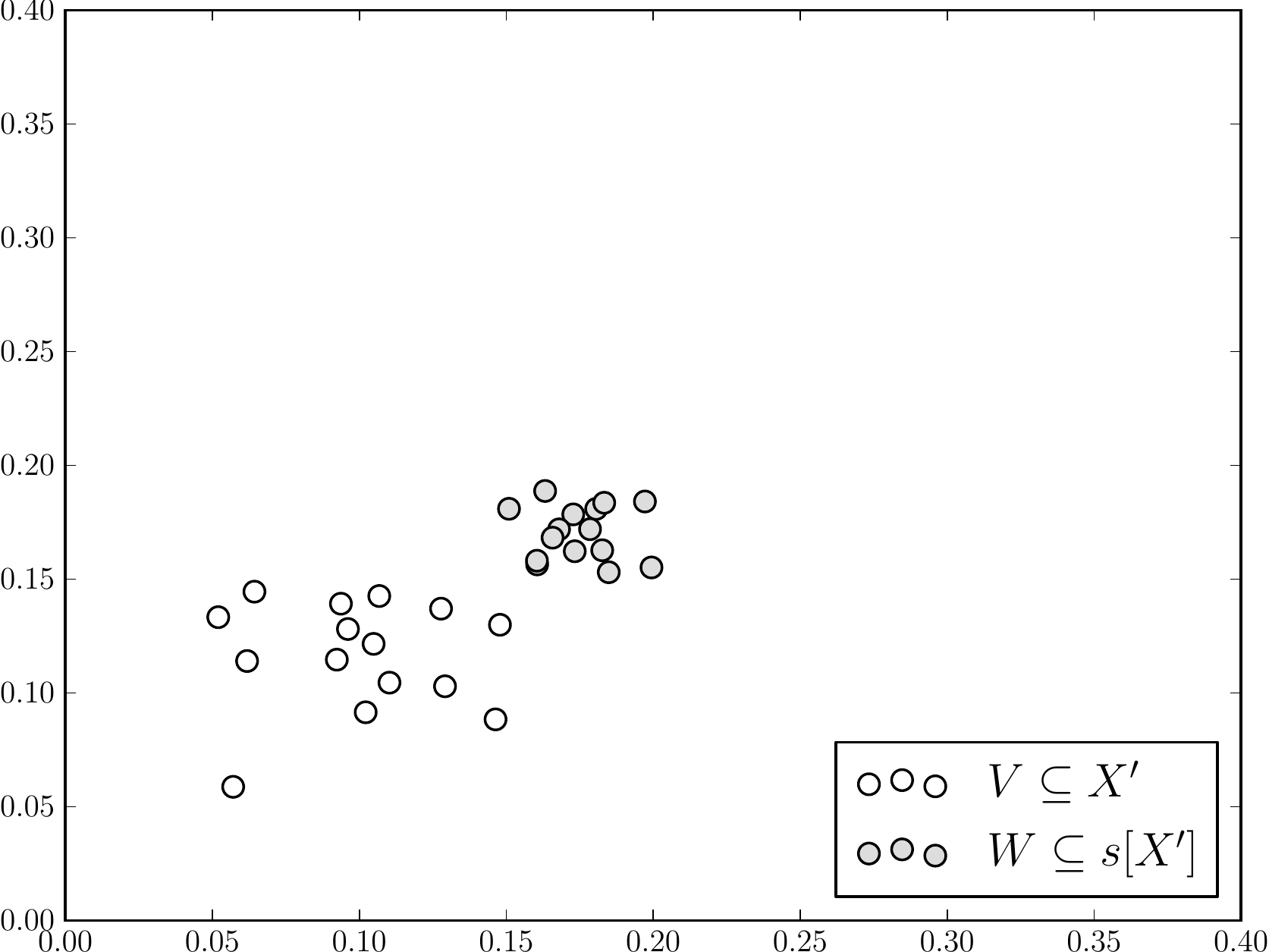}
  \else \includegraphics[width=0.47\textwidth] {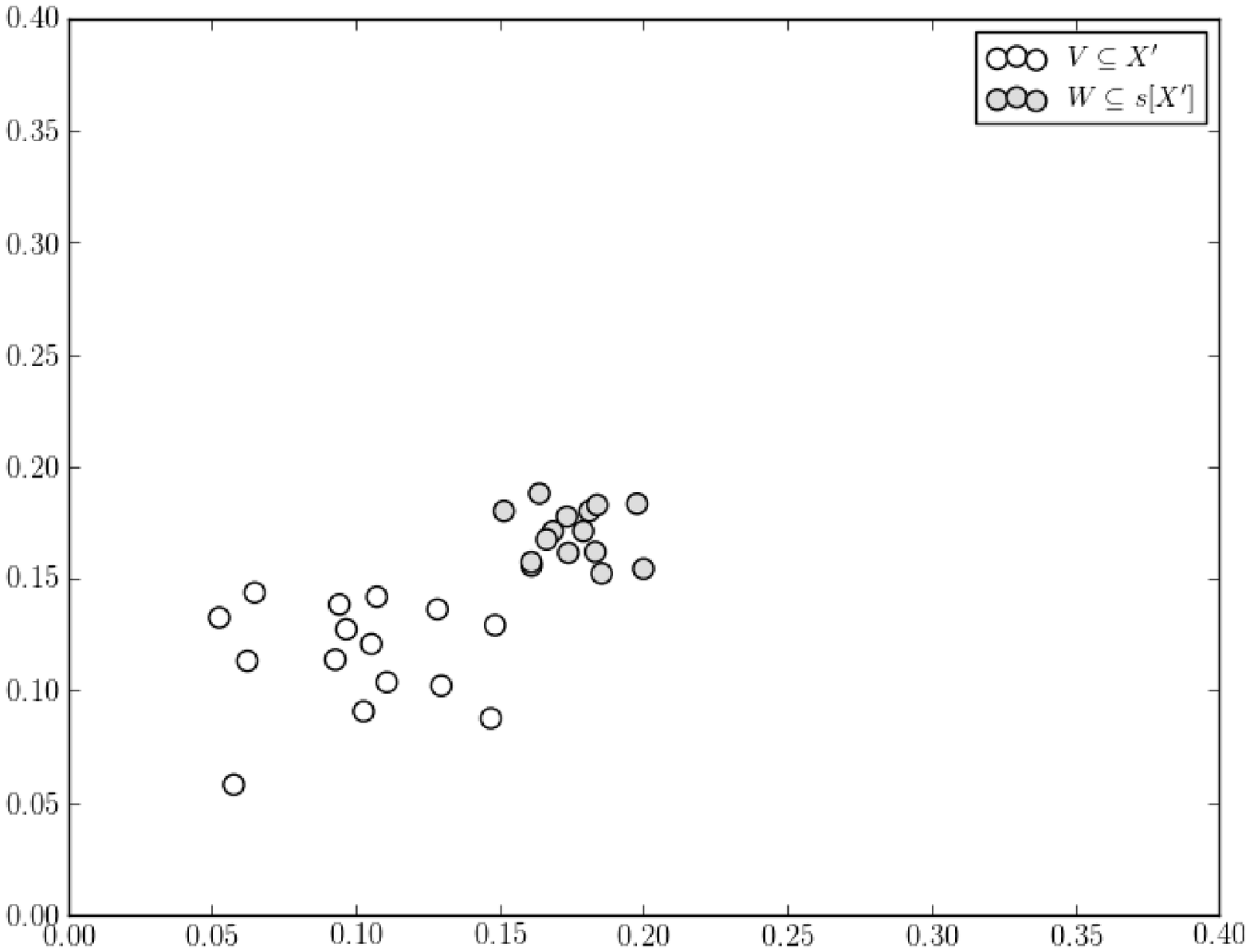}
  \fi} \quad
  \subfloat[``Transformed'' samples: data set $B$]{\label{fig:MEb}
  \ifpdf \includegraphics[width=0.47\textwidth]{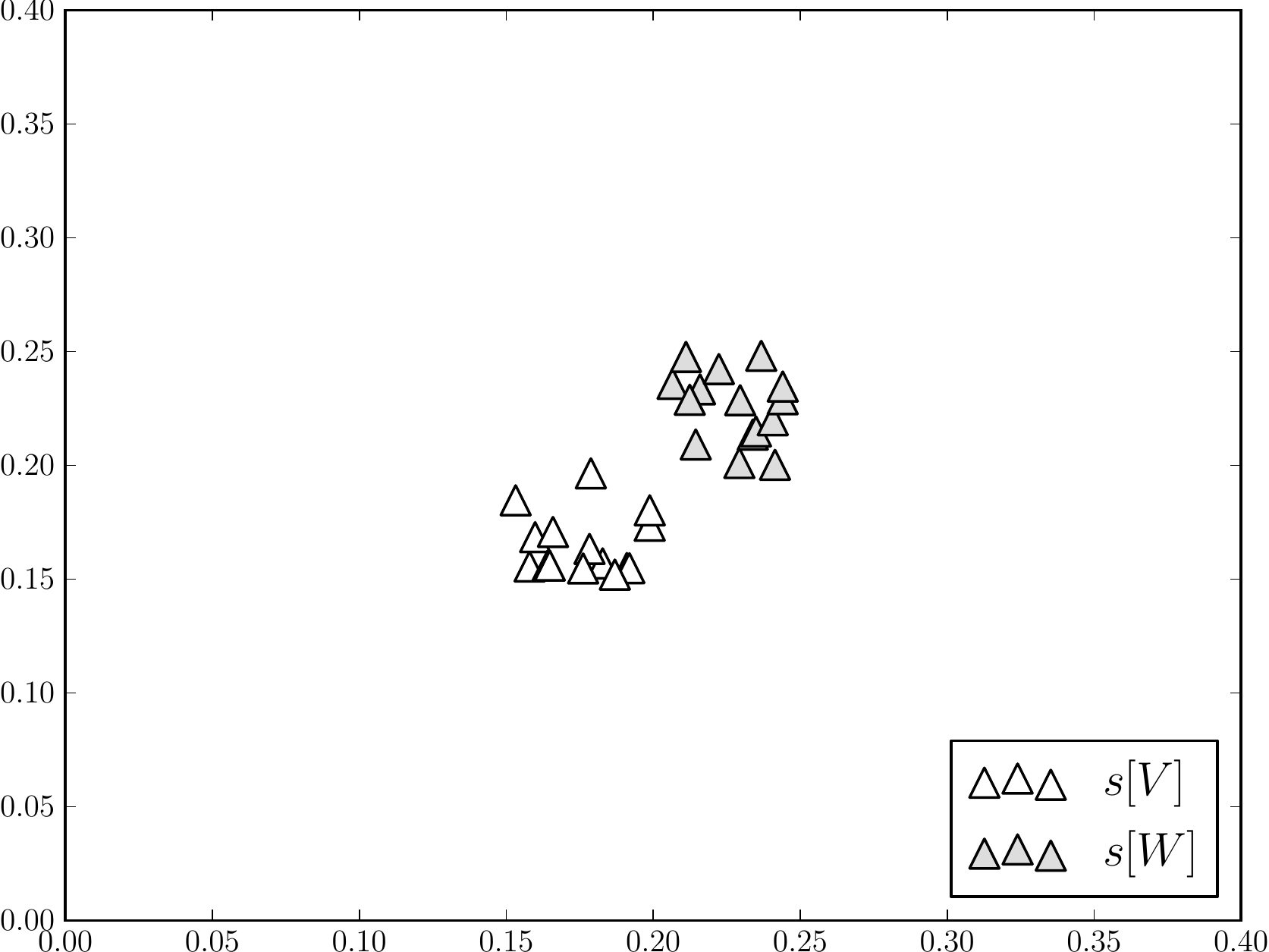}
  \else \includegraphics[width=0.47\textwidth] {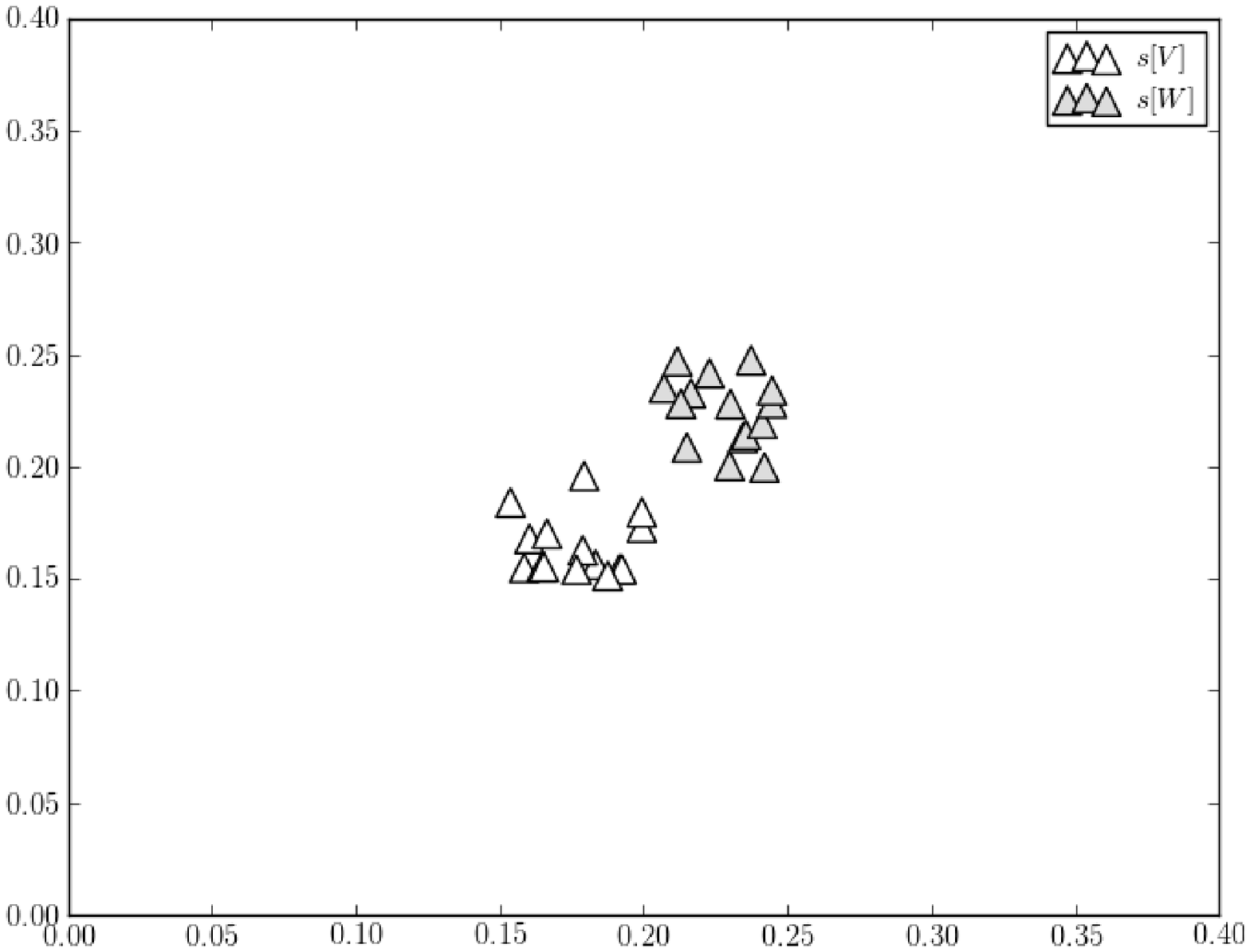}
  \fi} \\
  \subfloat[``Double transformed'' samples: data set $C$]{\label{fig:MEc}
  \ifpdf \includegraphics[width=0.47\textwidth]{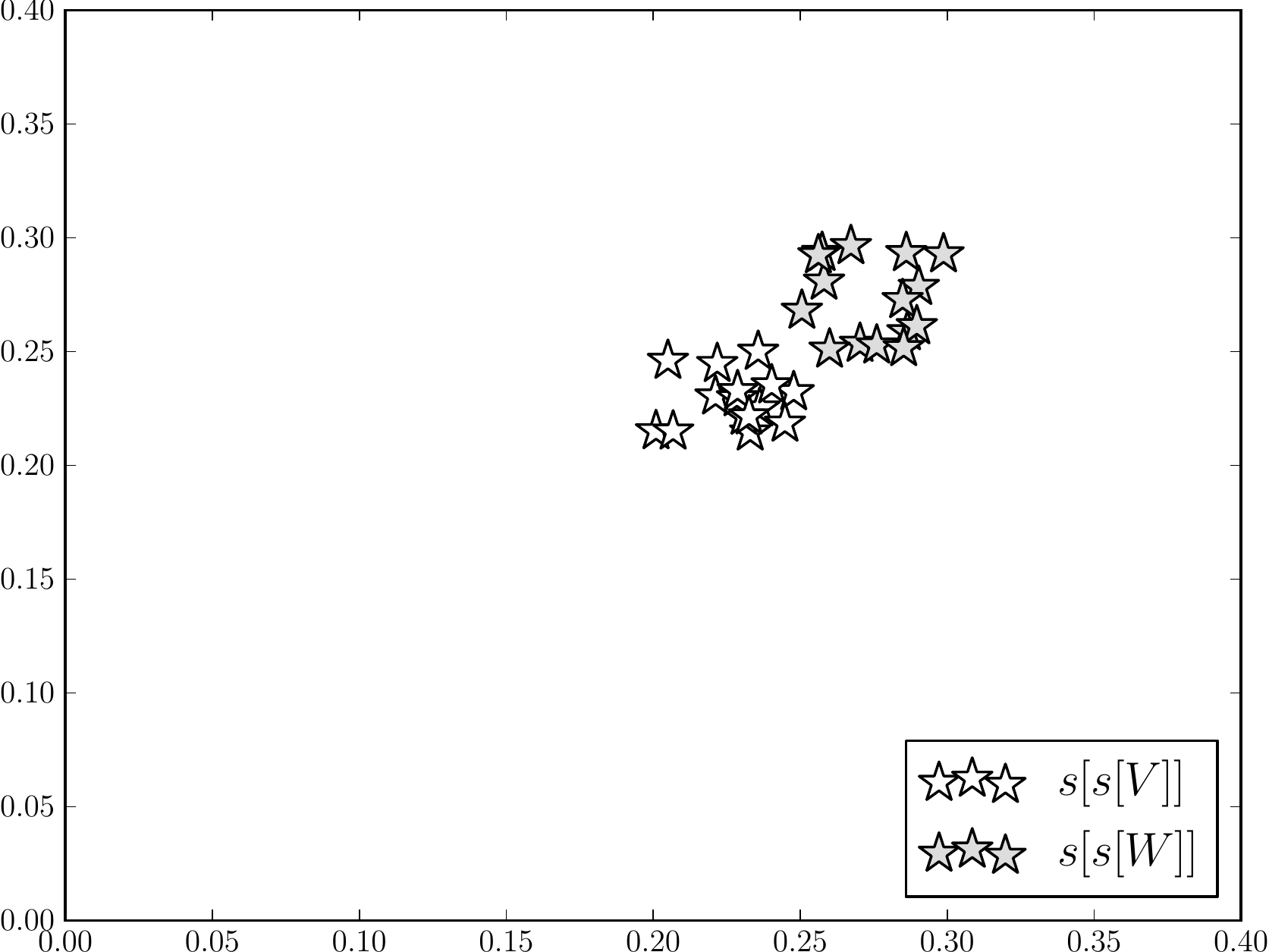}
  \else \includegraphics[width=0.47\textwidth]{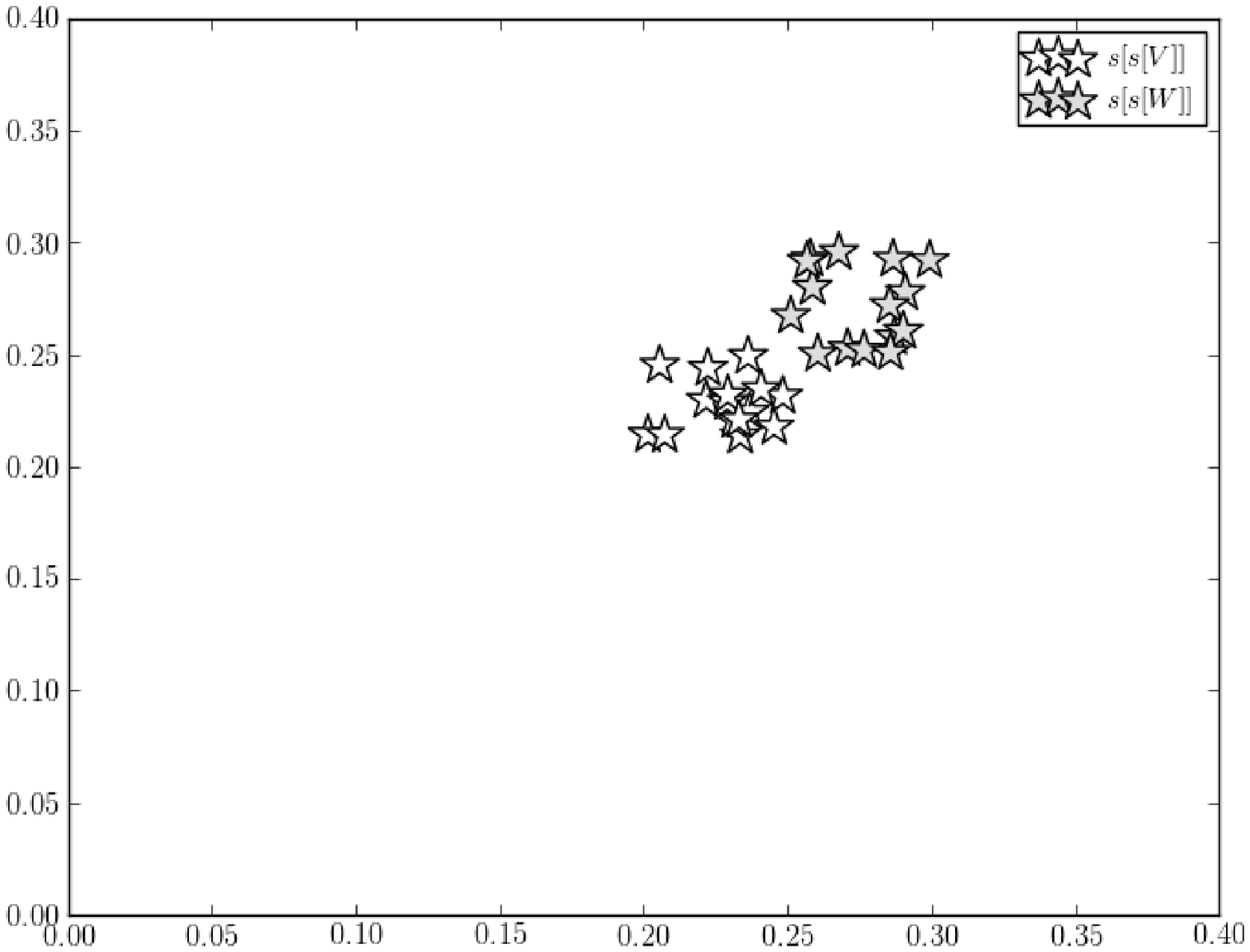}
  \fi} \quad
  \subfloat[All data sets]{\label{fig:MEd}
  \ifpdf \includegraphics[width=0.47\textwidth]{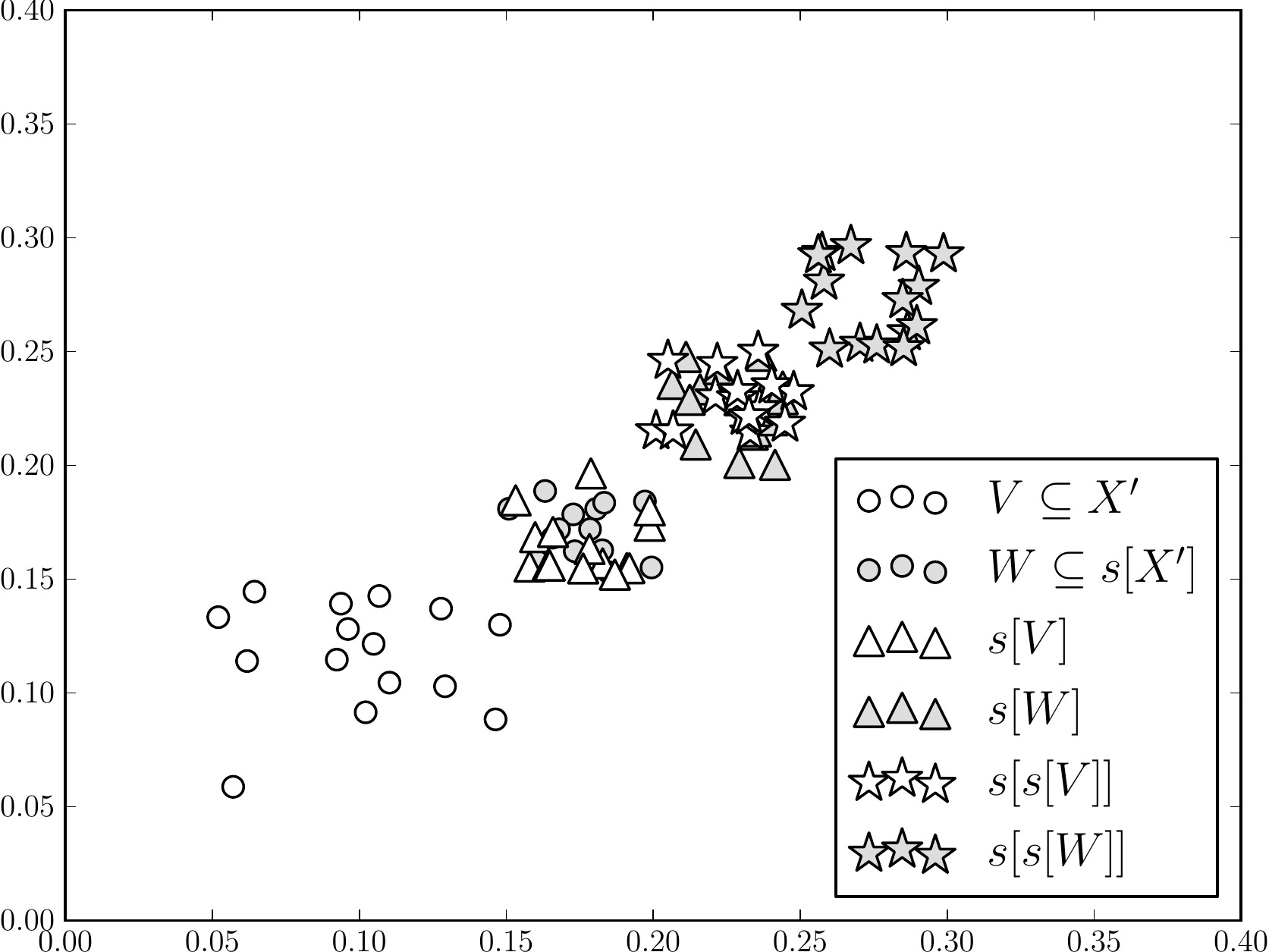}
  \else \includegraphics[width=0.47\textwidth]{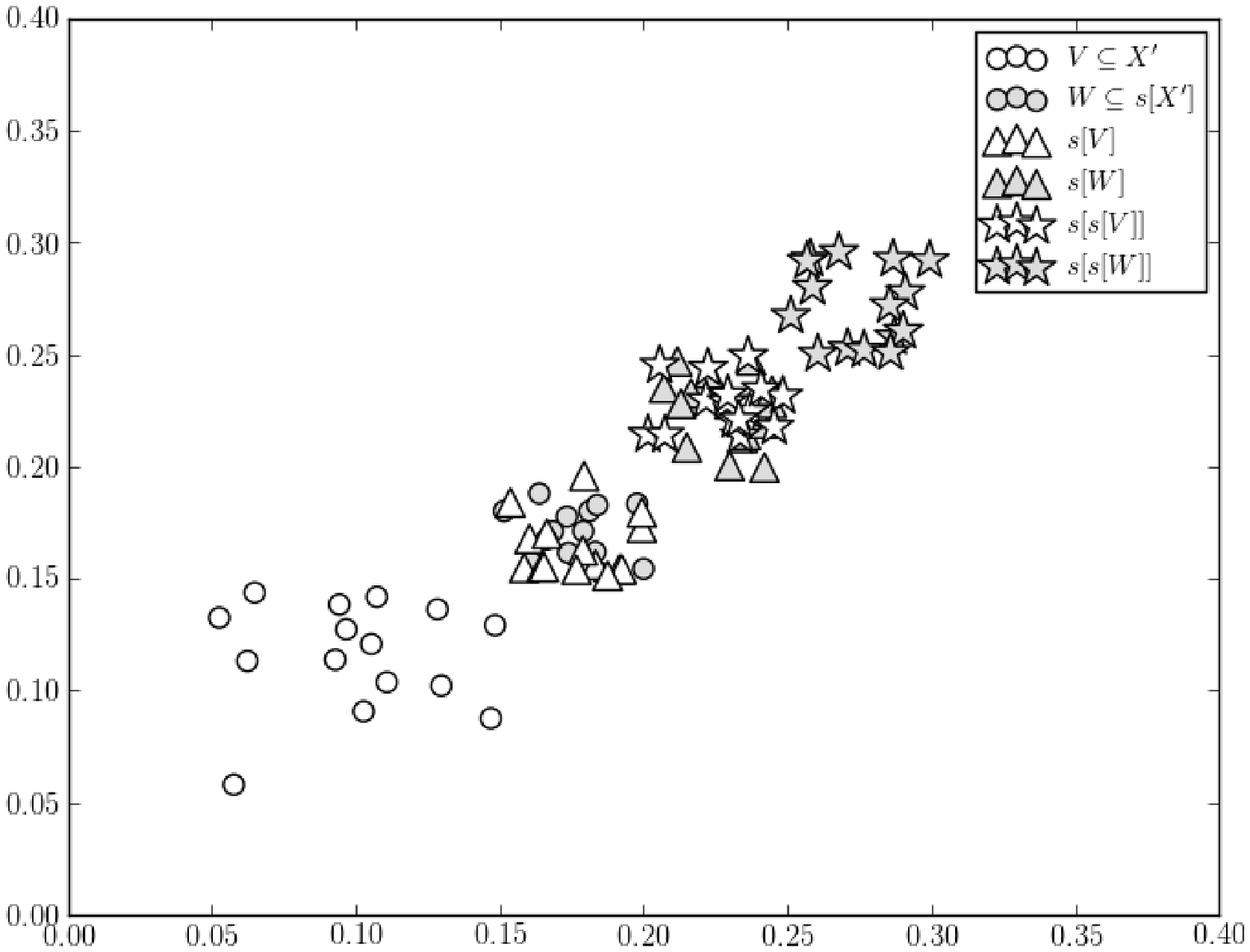}
  \fi} \\
  \subfloat[Learning boundary $A/C$]{\label{fig:MEe}
  \ifpdf \includegraphics[width=0.47\textwidth]{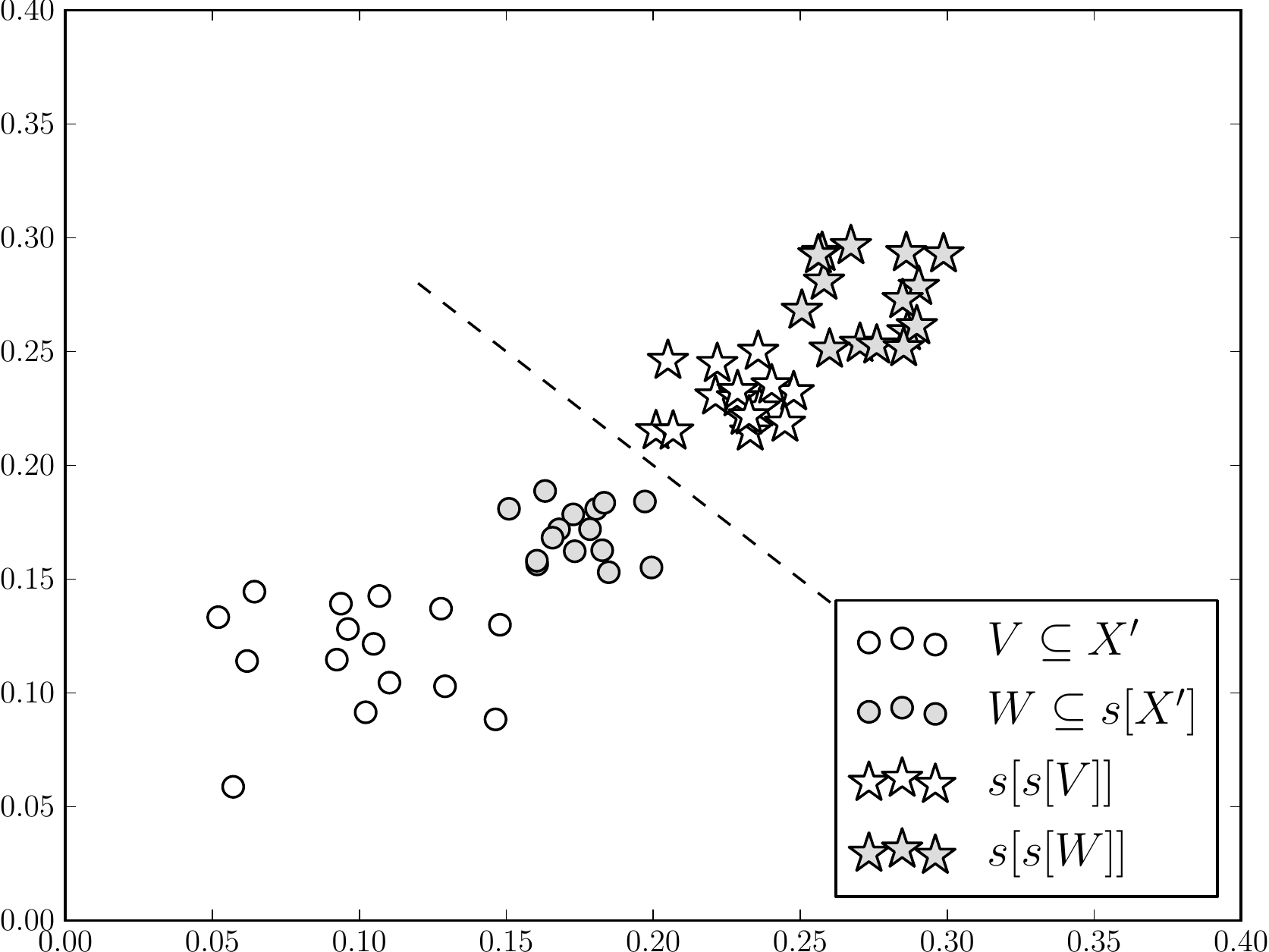}
  \else \includegraphics[width=0.47\textwidth]{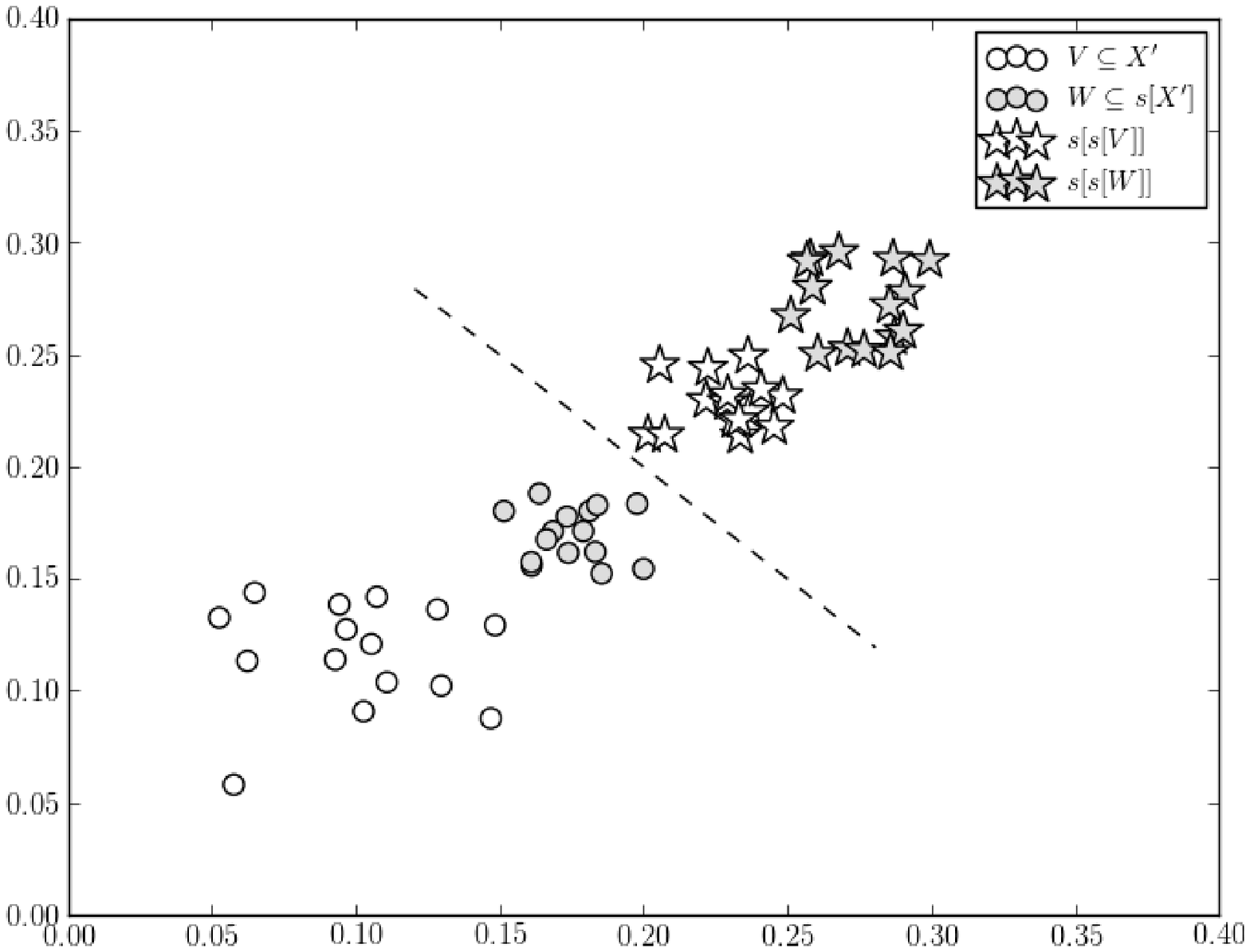}
  \fi} \quad
  \subfloat[Classification of the data set $B$]{\label{fig:MEf}
  \ifpdf \includegraphics[width=0.47\textwidth]{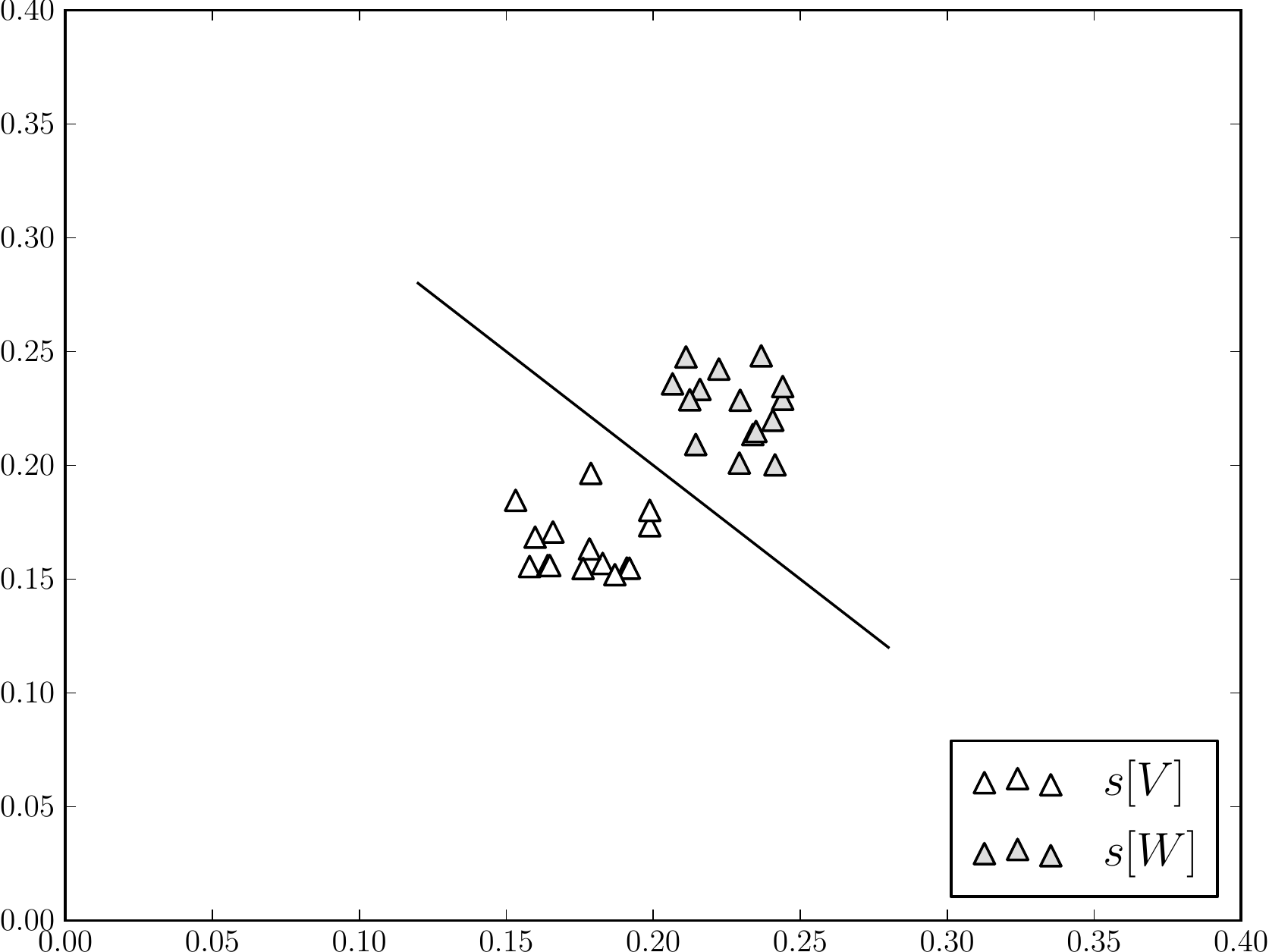}
  \else \includegraphics[width=0.47\textwidth]{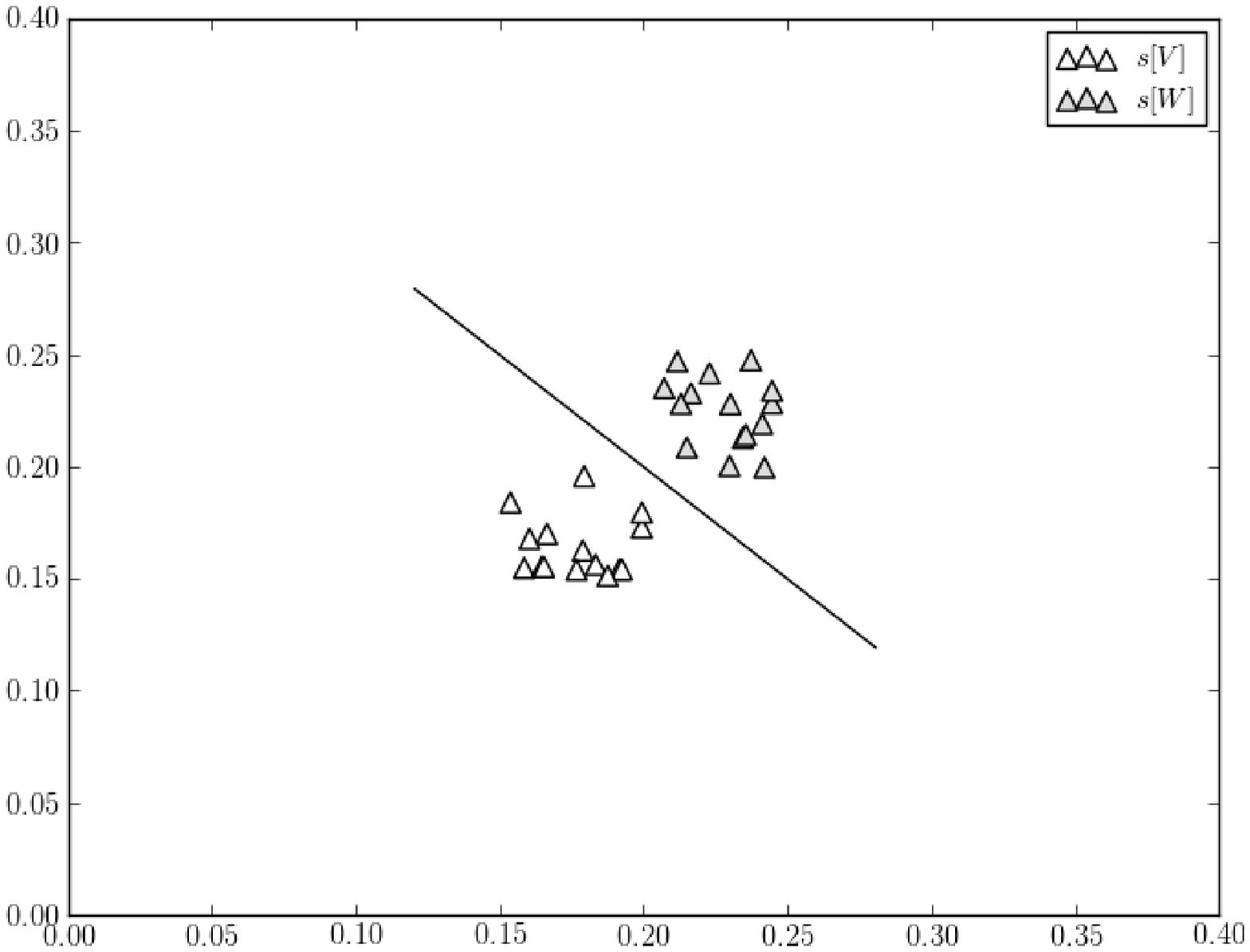}
  \fi} \\
  \caption{Simple graphical representation of the proposed method}
  \label{fig:ATS_method}
\end{figure}

\textcolor{black}{First of all, we} assume that the embedding algorithm and the approximate bit rate used by the steganographer \textcolor{black}{are known}. Using the same algorithm and bit rate, we can perform new embedding operations to all the images in the testing database. Let $A$ be the testing data set; \textcolor{black}{$B$ the transformed set, obtained after embedding data in all the images of $A$; and $C$ the double transformed set, obtained after embedding data in all the images of $B$.} As a result, $A$ contains cover and stego images, $B$ contains stego and ``double stego'' images, and $C$ contains ``double stego'' and ``triple stego'' images. The interesting fact of this situation is that, if we create an artificial training set formed by $A$ and $C$, we can train a classifier to learn the boundary between $A$ and $C$. In principle, the same boundary can be used to classify the transformed set $B$ into stego  and ``double stego'' images. Furthermore, the existing bijection between the elements of $B$ and $A$ makes it possible to relate each stego image of $B$ with a cover  image in $A$, and each ``double stego'' image in $B$ with a stego image in $A$. Hence, classifying $B$ as stego or ''double stego'' images is equivalent to classifying $A$ as cover or stego images. The bijection between the elements of $A$ and $B$ can be recorded in order to complete the classification of the original testing set $A$. 

A simple graphical representation of this approach is shown in Fig.\ref{fig:ATS_method} to \textcolor{black}{illustrate} the rationale behind the proposed algorithm. In Fig.\ref{fig:MEa}, we can see the set $A$, with the cover and stego samples depicted as white and gray circles, respectively. \textcolor{black}{Although the cover and stego images are shown with circles of different color, note that this set is not labeled for classification (since it is the testing data set). In Fig.\ref{fig:MEb}, the set $B$, resulting from the application of the steganographic algorithm to all the images of the $A$, are shown. In this case, 
the cover images of $A$ become stego images of $B$ (white triangles), whereas the stego images of $A$ become ``double stego'' images of $B$ (gray triangles). Similarly, in Fig.\ref{fig:MEc}, we can see the ``double transformed'' set $C$, which contains ``double stego'' (white stars) and ``triple stego'' images (gray stars). In Fig.\ref{fig:MEd}, all the data sets $A$, $B$ and $C$ are shown together.} 

\textcolor{black}{The boundary between $A$ and $C$ can be found using machine learning with $A \cup C$ as a training set, as shown in Fig.\ref{fig:MEe}. In this step, two different labels must be used, one for the images of $A$ and the other one for those of $C$. Then, this trained classifier can be applied to the set $B$ as depicted in Fig.\ref{fig:MEf}, where the learnt boundary is used to classify the images of $B$ as stego or ``double stego''. Finally, this result can be applied to separate $A$ into cover and stego images, since stego images in $B$ match cover images in $A$, whereas ``double stego'' images in $B$ match stego images in $A$.} 


\begin{figure}[ht!]
\centering 
  \ifpdf \includegraphics[width=0.95\textwidth] {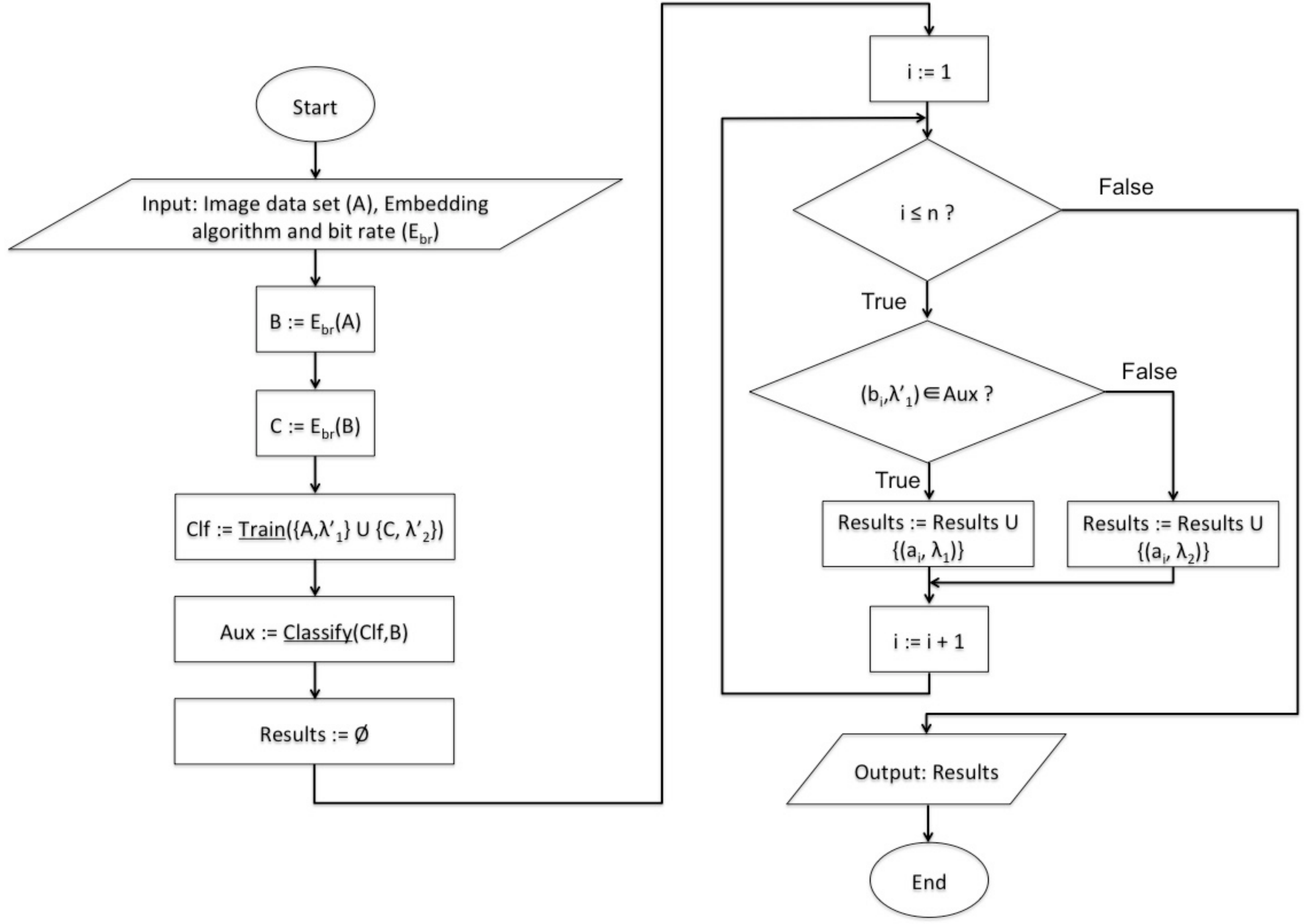}
  \else \includegraphics[width=0.95\textwidth] {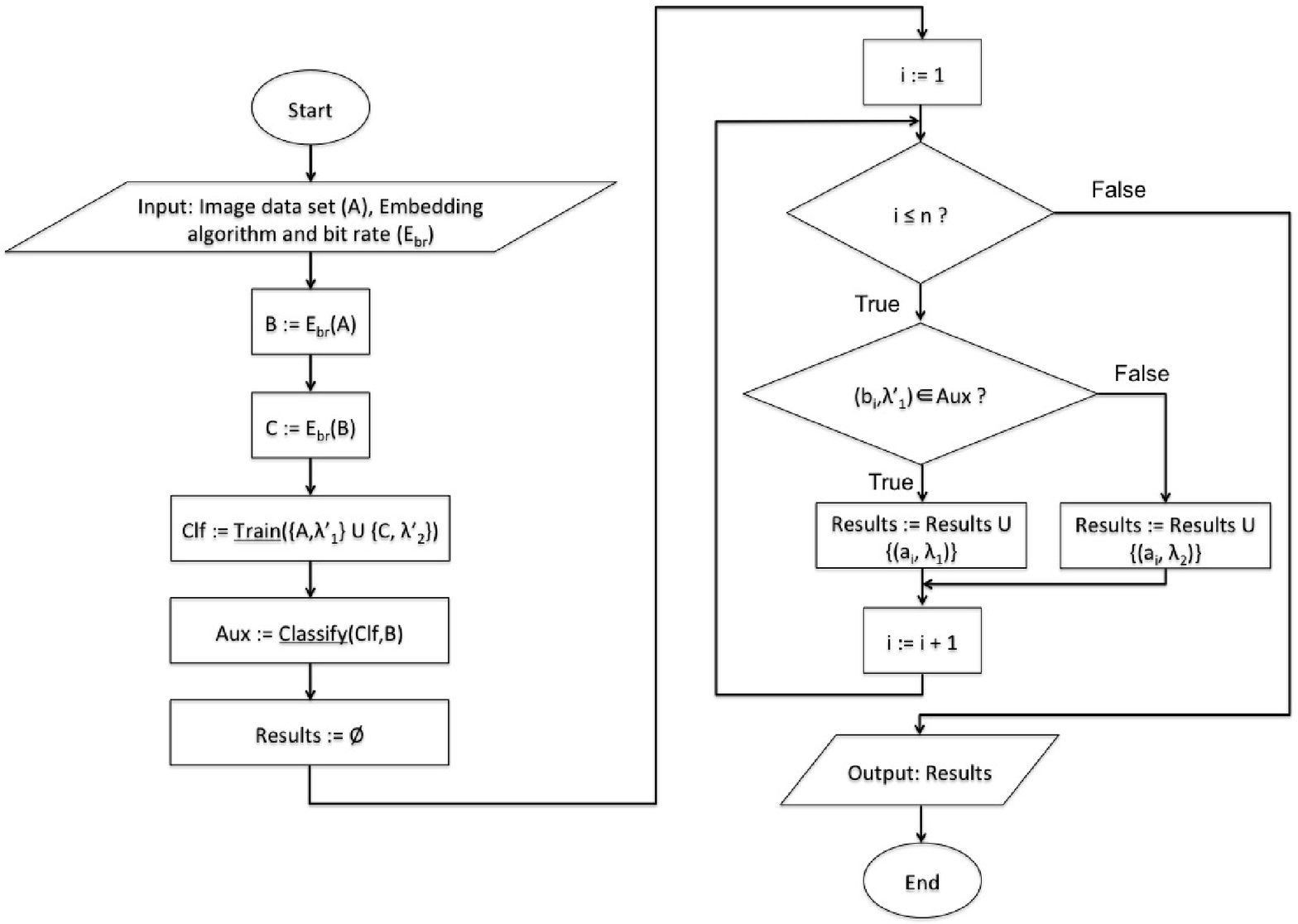}
  \fi
\caption{Flowchart of the proposed method}
\label{fig:flowchart}
\end{figure}

\textcolor{black}{
A flowchart of the proposed algorithm is shown in Fig.\ref{fig:flowchart}. For the sake of notational simplicity, $(A,\lambda'_1)$ and $(B,\lambda'_2)$ stand for the Cartesian products $A\times \{\lambda'_1\}$ and $B\times \{\lambda'_2\}$, respectively. Similarly, a function applied to a set, e.g. $E_{br}(A)$, means that the function is applied to all the elements (images) of that set, and the resulting image set is returned. The call to the function \underline{\em Train} returns a classifier \textit{Clf}. This classifier is then used to separate $B$ into stego ($\lambda'_1$) and ``double stego'' ($\lambda'_2$) images, by calling the function \underline{\em Classify}. The final loop ``translates'' the classification of the images of $B$, as stego ($\lambda'_1$) or ``double stego'' ($\lambda'_2$), into the classification of the images of $A$, as cover ($\lambda_1$) or stego ($\lambda_2$). In this loop, $n=|A|=|B|=|C|$, where $|\cdot|$ stands for the cardinality of a set.}

\textcolor{black}{
It must be taken into account that the notation in the flowchart is deliberately abused, since the training and classification procedures are not carried out directly with the images, but with their feature vectors which must be extracted before.}

A theoretical analysis of the approach is provided in the next section. The theoretical analysis includes a theorem and a proof based on the assumptions taken (which are standard assumptions in the field of targeted steganalysis).

\section{Theoretical Analysis}
\label{sec:TheoreticalAnalysis}
This section is aimed at providing a theoretical framework for the suggested method, by means of definitions, lemmas, a theorem and proofs. 


First of all, we provide a formalisation of learning algorithms with some basic definitions.
Consider an input space $X$ (of samples) and an output space $L=\{\lambda_1,\lambda_2,\dots,\lambda_M\}$ (of labels), and assume that the pairs $(x_i,l_i) \in X \times L$ are random variables i.i.d. according to an unknown distribution $D$, such that each $x_i$ is associated to a unique label $l_i$.

The goal of a classification (learning) algorithm is to find a function $h:X\to L$ that predicts $l_i$ from $x_i$. The function $h$ is an approximation of an unknown \textcolor{black}{labeling} function $t: X\to L$ that relates each value of $X$ with its corresponding unique label in $L$.

There are two types of classification methods:
\begin{itemize}
\item \textit{Supervised classification}: in this case, we use a \textit{training set} of $m$ \textbf{known} samples (pairs) $\{(x_1,l_1),(x_2,l_2),$ $\dots$ $,(x_m,l_m)\}\subset X \times L$ in a training fase to obtain the function $h$. After that, the function $h$ is used to classify a \emph{testing set} of $n$ vectors $\{x'_1,x'_2,\dots,x'_n\}\subset X$, without any label assigned to them.
\item \textit{Unsupervised classification}: in this case, the learning algorithm obtains the function $h$ to classify a testing set without requiring any training set.
\end{itemize}

Now, we provide some definitions required for the proposed method and a formal proof by means of a theorem.

\begin{definition}
We call \emph{splitting function} to a function $s: X' \to s[X']$, where  \linebreak $X', s[X'], s[s[X']], s[s[s[X']]] \subset X$, such that:
\begin{enumerate}
\item When $s(x_i)=x_j$ and $t(x_i)=\lambda_k$, then $t(x_j)=\lambda_{k+1}$ (with $\lambda_k \in L$ for $k<M$), 
\item $X' \cap s[X']=\emptyset$, 
\item $X' \cap s[s[X']]=\emptyset$, and
\item $X' \cap s[s[s[X']]]=\emptyset$.

\end{enumerate}
\label{def:transform}
\end{definition}

\begin{definition}
Given a set $V$ of $n_1$ vectors $\{v_1,v_2,\dots,v_{n_1}\}$ $\subseteq$ $X'$, such that $t(v_i)=\lambda_1$ for all $i=1,2,\dots,n_1$, and $X'\subset X$, and a set $W$ of $n_2$ vectors $\{w_1,w_2,\dots,w_{n_2}\}\subseteq s[X']$, whereby $s[X'] \subset X$, for some \emph{splitting function} $s$ (and hence $X'\cap s[X']=\emptyset$, $X'\cap s[s[X']]=\emptyset$ and $X'\cap s[s[s[X']]]=\emptyset$), we call $A=V \cup W$ an \emph{$s$-partable set}.
\label{def:separable}
\end{definition}

\begin{remark}
Note that the label corresponding to the vectors of $W$, according to Def.\ref{def:transform}, is $\lambda_2$.
\end{remark}

\begin{remark}
The sets $V$ and $W$ are disjoint as per Condition 2 of Def.\ref{def:transform} and, thus, we assume that some feature set exists to classify the elements of $A$ into $V$ and $W$ by using machine learning. In an $s$-partable set, we should be able to approximately classify the set $A=V \cup W$ into the subsets $V$ (with label $\lambda_1$) and $W$ (with label $\lambda_2$), hence the name ``$s$-partable''. This would occur if the splitting function produces some ``measurable difference'' when applied to the original data vectors.
\end{remark}

\begin{lemma} Given an $s$-partable set $A\subseteq X' \cup s[X']$, for some splitting function $s$, then $s[s[X']] \cap A = \emptyset$.
\label{lemma:first}
\end{lemma}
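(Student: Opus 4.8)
The plan is to prove the disjointness by tracking \emph{labels} rather than by manipulating the sets directly, relying on the fact (from the setup) that $t$ assigns to every element of $X$ a single, well-defined label, together with the label-shift property in Condition~1 of Def.~\ref{def:transform}. A natural first move is purely set-theoretic: since $A \subseteq X' \cup s[X']$ by hypothesis, one distributes $s[s[X']] \cap A \subseteq (s[s[X']] \cap X') \cup (s[s[X']] \cap s[X'])$, and Condition~3 of Def.~\ref{def:transform} immediately kills the first term. The second term $s[X'] \cap s[s[X']]$, however, is \emph{not} among the listed disjointness conditions (2--4), so this route stalls; this already tells me that the proof must genuinely invoke Condition~1, not merely the set-disjointness axioms.

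First I would unfold the hypotheses to pin down the labels occurring in $A$. By Def.~\ref{def:separable}, write $A = V \cup W$ with $V \subseteq X'$ and $W \subseteq s[X']$; by hypothesis every element of $V$ carries the label $\lambda_1$, and by the Remark following Def.~\ref{def:separable} every element of $W$ carries the label $\lambda_2$. Hence every element of $A$ has a label in $\{\lambda_1,\lambda_2\}$.

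Next I would compute the labels forced on $s[s[X']]$. Take an arbitrary $z \in s[s[X']]$, so $z = s(s(x))$ for some $x \in X'$, and write $t(x) = \lambda_k$ with $k \geq 1$ (since the smallest index available in $L$ is $1$). Applying Condition~1 of Def.~\ref{def:transform} twice gives $t(s(x)) = \lambda_{k+1}$ and then $t(z) = \lambda_{k+2}$, so the index satisfies $k+2 \geq 3$; that is, every element of $s[s[X']]$ carries a label in $\{\lambda_3,\lambda_4,\dots\}$. Because $t$ is single-valued, no element can simultaneously carry a label in $\{\lambda_1,\lambda_2\}$ and one in $\{\lambda_3,\lambda_4,\dots\}$, and since these two label sets are disjoint, no such $z$ can lie in $A$. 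As $z$ was arbitrary, this yields $s[s[X']] \cap A = \emptyset$.

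The step I expect to be the main obstacle is conceptual rather than computational: recognising that Conditions~2--4 of Def.~\ref{def:transform} are \emph{insufficient} on their own (they control $X'$ against its iterated images but say nothing about $s[X']$ versus $s[s[X']]$), and that it is precisely the label increment of Condition~1, combined with the single-valuedness of $t$, that forces the remaining overlap to vanish by pushing the labels of $s[s[X']]$ to index at least $3$. A secondary point to verify is the boundary behaviour of the increments, namely that the relevant indices stay within $L$ (i.e. do not exceed $M$); this is harmless in the targeted-steganalysis setting of interest, where $M \geq 4$ and the labels $\lambda_1,\lambda_2,\lambda_3$ used here are always available.
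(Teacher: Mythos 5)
Your proof is correct, but it takes a genuinely different route from the paper's. The paper argues purely set-theoretically: it performs exactly the decomposition you sketch in your opening paragraph, kills $s[s[X']] \cap X'$ with Condition~3, and then disposes of the remaining term $s[s[X']] \cap s[X']$ by re-applying Condition~2 \emph{with $s[X']$ in the role of $X'$} (the paper writes this as ``with $Y'=s[X']$''). In other words, at the exact point where you conclude the set-algebraic route ``stalls'', the paper instead reads Conditions~2--4 of Def.~\ref{def:transform} as holding for any base set to which $s$ is applied --- a reading it only makes explicit later, in the steganalysis subsection, where it states that ``since the condition applies to any subset $X' \subset X$, we also require, for example, that $s[X'] \cap s[s[X']] = \emptyset$.'' Your label-counting argument derives that missing disjointness from Condition~1 together with the single-valuedness of $t$: elements of $A$ carry labels $\lambda_1$ or $\lambda_2$, elements of $s[s[X']]$ carry labels of index at least $3$, so the two sets cannot meet. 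This buys a proof that goes through under the literal statement of Def.~\ref{def:transform}, with no re-instantiation of the disjointness axioms at iterated images, and it makes visible that Condition~1 essentially subsumes Conditions~2--4 whenever the labels are distinct and do not run out at index $M$. What it costs is the dependence on the label bookkeeping --- in particular on the boundary caveat $k+2 \leq M$ that you correctly flag, and on the Remark assigning label $\lambda_2$ to $W$ --- none of which the paper's one-line set computation needs. Both arguments are valid under the intended reading of the definitions; yours is the more self-contained one relative to the definitions as literally stated.
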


\begin{proof}
\[
\begin{split}
s[s[X']] \cap A &\subseteq s[s[X']] \cap \left(X' \cup s[X'] \right) \\
&= \left(s[s[X']] \cap X'\right) \cup \left(s[s[X']] \cap s[X']\right) \\
&= \emptyset \cup \emptyset \\
&= \emptyset.
\end{split}
\]
Note that $\left(s[s[X']] \cap X'\right)=\emptyset$ due to Condition 3 in Def.\ref{def:transform} and $\left(s[s[X']] \cap s[X']\right)=\emptyset$ due to Condition 2 in Def.\ref{def:transform} (with $Y'=s[X']$).
\end{proof}

\begin{lemma} Given an $s$-partable set $A\subseteq X' \cup s[X']$, for some splitting function $s$, then $s[X'] \cap s[s[A]] = \emptyset$.
\begin{proof}
\[
\begin{split}
s[X'] \cap s[s[A]] &\subseteq s[X'] \cap \left(s[s[X']] \cup s[s[s[X']]] \right) \\
&= \left(s[X'] \cap s[s[X']]\right) \cup \left(s[X'] \cap s[s[s[X']]]\right) \\
&= \emptyset \cup \emptyset \\
&= \emptyset.
\end{split}
\]
Note that $\left(s[X'] \cap s[s[X']]\right)=\emptyset$ due to Condition 2 in Def.\ref{def:transform} (with $Y'=s[X']$) and \linebreak $\left(s[X'] \cap s[s[s[X']]]\right)=\emptyset$ due to Condition 3 in Def.\ref{def:transform} (again, with $Y'=s[X']$). 
\end{proof}
\label{lemma:second}
\end{lemma}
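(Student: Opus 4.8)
The plan is to imitate the argument already used for Lemma~\ref{lemma:first}: I would push the set $A$ through the image operator $s[\,\cdot\,]$ twice, bound the result by a union of iterated images of $X'$, and then intersect with $s[X']$ and discard each piece using the disjointness conditions of Def.~\ref{def:transform}. Since the statement is structurally parallel to Lemma~\ref{lemma:first} (there the outer set was $s[s[X']]$ and the inner set was $A$; here the roles are essentially swapped), I expect the bulk of the work to be a direct set-algebra computation.

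First I would use the hypothesis $A \subseteq X' \cup s[X']$ together with the fact that the set-image operator is monotone and distributes over unions. Applying $s$ once gives $s[A] \subseteq s[X'] \cup s[s[X']]$, and applying it a second time yields $s[s[A]] \subseteq s[s[X']] \cup s[s[s[X']]]$. Intersecting with $s[X']$ and distributing the intersection over the union, I obtain
\[
s[X'] \cap s[s[A]] \subseteq \bigl(s[X'] \cap s[s[X']]\bigr) \cup \bigl(s[X'] \cap s[s[s[X']]]\bigr).
\]
It then remains to show that both terms on the right vanish. The term $s[X'] \cap s[s[X']]$ is exactly Condition~2 of Def.~\ref{def:transform} read with $s[X']$ in the role of $X'$, while $s[X'] \cap s[s[s[X']]]$ is Condition~3 read the same way; each is therefore empty, and the claim follows because $s[X'] \cap s[s[A]]$ sits inside the union of two empty sets.

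The main obstacle — indeed the only non-routine point — is the legitimacy of reading Conditions~2 and~3 ``one level up'', i.e. with $s[X']$ substituted for $X'$. This is precisely the move made in the proof of Lemma~\ref{lemma:first} (recorded there as ``with $Y'=s[X']$''), so I would either invoke it as the same convention or justify it directly: if $s$ is injective then $s[P]\cap s[Q]=s[P\cap Q]$, whence $s[X']\cap s[s[X']]=s[X'\cap s[X']]=s[\emptyset]=\emptyset$ from the original Condition~2, and likewise $s[X']\cap s[s[s[X']]]=s[X'\cap s[s[X']]]=\emptyset$ from the original Condition~3. Everything else is routine manipulation of unions and intersections, so once this substitution is pinned down the proof closes immediately.
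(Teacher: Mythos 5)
Your proposal follows exactly the paper's argument: bound $s[s[A]]$ by $s[s[X']]\cup s[s[s[X']]]$, distribute the intersection with $s[X']$, and kill both terms by reading Conditions~2 and~3 of Def.~\ref{def:transform} with $s[X']$ in place of $X'$ (the paper's ``with $Y'=s[X']$''), which is the same substitution the paper makes. Your additional remark on justifying that substitution via injectivity of $s$ goes slightly beyond what the paper writes, but the proof itself is the same.
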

\begin{lemma} Given an $s$-partable set $A\subseteq X' \cup s[X']$, for some splitting function $s$, then $A \cap s[s[A]] = \emptyset$.
\begin{proof}
The proof directly follows from considering 
\[ A \cap s[s[A]] \subseteq (X' \cup s[X']) \cap (s[s[X']] \cup s[s[s[X']]]),\]
and then applying the distributive property of set algebra and Conditions 2, 3 and 4 of Def.\ref{def:transform}.
\label{lemma:third}
\end{proof}
\end{lemma}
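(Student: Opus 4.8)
The plan is to mirror the strategy already used in Lemmas~\ref{lemma:first} and \ref{lemma:second}: push the containment $A \subseteq X' \cup s[X']$ through two applications of $s$, and then reduce the intersection $A \cap s[s[A]]$ to a union of pairwise intersections, each of which is forced to be empty by one of the conditions in Def.\ref{def:transform}.

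First I would exploit the fact that taking the image of a set under $s$ distributes over unions, i.e. $s[P \cup Q] = s[P] \cup s[Q]$, applied twice to the hypothesis $A \subseteq X' \cup s[X']$. This yields $s[s[A]] \subseteq s[s[X']] \cup s[s[s[X']]]$, and hence
\[
A \cap s[s[A]] \subseteq \bigl(X' \cup s[X']\bigr) \cap \bigl(s[s[X']] \cup s[s[s[X']]]\bigr).
\]
Next I would apply the distributive property of intersection over union to expand the right-hand side into the four cross-terms $X' \cap s[s[X']]$, $X' \cap s[s[s[X']]]$, $s[X'] \cap s[s[X']]$ and $s[X'] \cap s[s[s[X']]]$. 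The first vanishes by Condition~3, the second by Condition~4, the third by Condition~2 (with $s[X']$ in the role of $X'$), and the fourth by Condition~3 (again with $s[X']$ in the role of $X'$). Since $A \cap s[s[A]]$ is contained in a union of four empty sets, the conclusion follows.

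The only delicate point is the substitution of $s[X']$ for $X'$ when invoking Conditions~2 and 3 to annihilate the last two cross-terms; this is precisely the device already employed in the proofs of Lemmas~\ref{lemma:first} and \ref{lemma:second}, so no genuinely new obstacle arises. In fact, one can sidestep the substitution altogether by quoting Lemma~\ref{lemma:second} directly to kill the term $s[X'] \cap s[s[A]]$, and then disposing of the single remaining term $X' \cap s[s[A]] \subseteq \bigl(X' \cap s[s[X']]\bigr) \cup \bigl(X' \cap s[s[s[X']]]\bigr)$ via Conditions~3 and 4. Either route is a routine set-algebra computation once the containment has been propagated correctly, which is why the author can dispatch it in a single sentence.
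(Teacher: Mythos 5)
Your proposal is correct and follows exactly the route the paper intends: establish the containment $A \cap s[s[A]] \subseteq (X' \cup s[X']) \cap (s[s[X']] \cup s[s[s[X']]])$, distribute, and annihilate the four cross-terms via Conditions 2, 3 and 4 (with $s[X']$ substituted for $X'$ where needed), which is precisely the computation the paper's one-sentence proof leaves to the reader. The alternative shortcut via Lemma~\ref{lemma:second} is a harmless variation on the same idea.
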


\begin{theorem}
\emph{Unsupervised classification} of an \emph{$s$-partable} set $A=\{a_1, a_2, \dots, a_n\}$, formed as per Def.\ref{def:separable} for some splitting function $s$, can be achieved with \emph{supervised classification} of $B=s[A]$ as a \emph{testing set} if the function $s$ is known (or can be approximated) by constructing an artificial training set.

\begin{rationale}
The proof provided below is based on the following principles:
\begin{enumerate}
\item The set $A$ of samples to be classified and the set of ``double transformed'' samples $C=s[s[A]]$ are disjoint as per Lemma \ref{lemma:third}. Hence, we assume that a machine learning algorithm (for some set of features) can be trained to separate $A$ and $C$. Thus, we merge these two sets for training (using different labels for the elements of $A$ and $C$).
\item The set of ``transformed samples'' $B=s[A]$ has elements either in $s[X']$ or in $s[s[X']]$. These two sets are disjoint as per Condition 2 of Def.\ref{def:transform}.
\item The intersection between the subset of the elements of $B$ belonging to $s[X']$ and $C$ is empty, but the intersection of the same subset with $A$ is not. Hence, if we use the trained algorithm to classify $B$, there is a high probability that the elements of $B$ belonging to $s[X']$ will be classified with \textcolor{black}{the label used for the}  
elements of $A$.
\item Similarly, the intersection between the subset of the elements of $B$ belonging to $s[s[X']]$ and $A$ is empty, but the intersection of the same subset with $C$ is not. Hence, the elements of $B$ belonging to $s[s[X']]$ will be classified mainly with the same label as those of $C$.
\item Separating the elements of $B$ into $s[X']$ and $[s[s[X']]$ is equivalent to separating the elements of $A$ into $X'$ and $s[X']$, due to the existing bijection between the elements of $A$ and $B$.
\end{enumerate}
\end{rationale}

\begin{proof}
Since $A$ is an \emph{$s$-partable} set by definition (for the splitting function $s$), some elements $a_i$ belong to the set $X'$ and some other belong to $s[X']$. A classification function must assign the labels $\lambda_1$ and $\lambda_2$ for the elements $a_i$ belonging to $X'$ and $s[X']$, respectively.

Consider two additional sets $B=\{b_1, b_2, \dots, b_n\}=s[A]$, with $b_i=s(a_i)$, and $C=\{c_1, c_2, \dots, c_n\}=s[B]=s[s[A]]$, with $c_i=s(b_i)=s(s(a_i))$. Since $A\subseteq X' \cup s[X']$, then $B\subseteq  s[X'] \cup s[s[X']]$ and $C \subseteq s[s[X']] \cup s[s[s[X']]]$. Now, construct an auxiliary \emph{training set} $T$ as follows: 
\[
T=\{(a_1,\lambda'_1),(a_2,\lambda'_1),\dots,(a_n,\lambda'_1),
(c_1,\lambda'_2),(c_2,\lambda'_2),\dots,(c_n,\lambda'_2)\},
\]
such that $T \subset (A \cup C) \times L'$ for $L'=\{\lambda'_1,\lambda'_2\}$. Note that the label $\lambda'_1$ is used for the elements of $A$ and the label $\lambda'_2$ is used for the elements of $C=s[s[A]]$.

If we use \emph{supervised classification} with the training set $T$, we can obtain a classifying function $h: T \to L'$. This classification is possible since, as proven in Lemma \ref{lemma:third}, the intersection between $A$ and $C=s[s[A]]$ is the empty set. Now, if the function $h$ can classify $T$ into $A$ and $C$, the same function would classify $B$ into $s[X']$ and $s[s[X']]$, since $s[s[X']]\cap A=\emptyset$ (as proven in Lemma \ref{lemma:first}) and $s[X']\cap C=\emptyset$ (as proven in Lemma \ref{lemma:second}). On the other hand, $s[X'] \cap A$ and $s[s[X']] \cap C$ are not empty. Note also that, by Def.\ref{def:transform}, $s[X'] \cap s[s[X']]=\emptyset$.

Thus, using the same function $h$, we can classify each $b_i$ with the label $\lambda'_1$ (if it belongs to $s[X']$) or $\lambda'_2$ (if it belongs to $s[s[X']]$). Therefore, this function also classifies $A$ into $X'$ and $s[X']$, since $b_i=s(a_i)$. If $b_i$ is classified with the label $\lambda'_1$, this means that $b_i=s(a_i)$ belongs to $s[X']$ and $a_i$ belongs to $X'$. Similarly, if $b_i$ is classified with the label $\lambda'_2$, then $b_i=s(a_i)$ belongs to $s[s[X']]$ and $a_i$ belongs to $s[X']$. Hence, the classification of $b_i=s(a_i)$ as $\lambda'_1$ or $\lambda'_2$ is equivalent to the classification of $a_i$ as $\lambda_1$ or $\lambda_2$, respectively. Consequently, we can finally classify $A$ without labeled samples and, by definition, this is \emph{unsupervised classification}.
\end{proof}
\end{theorem}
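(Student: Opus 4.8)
The plan is to prove the statement by reduction: I would show that the task of labeling $A$ as $\lambda_1/\lambda_2$ with no labeled samples of $A$ available can be replaced by a genuinely \emph{supervised} problem whose training set is built artificially from $A$ and $s[s[A]]$, and whose testing set is $B=s[A]$. First I would fix the target partition. Since $A$ is $s$-partable (Def.~\ref{def:separable}), $A\subseteq X'\cup s[X']$, so every $a_i$ lies either in $X'$ (with $t(a_i)=\lambda_1$) or in $s[X']$ (with $t(a_i)=\lambda_2$ by Condition~1 of Def.~\ref{def:transform}), and these two subsets are disjoint by Condition~2. This is exactly the labeling I must recover.

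Next I would track the images of $A$ under one and two applications of $s$, using the bijection $a_i\mapsto b_i=s(a_i)\mapsto c_i=s(b_i)$. From $A\subseteq X'\cup s[X']$ I get $B\subseteq s[X']\cup s[s[X']]$ and $C=s[s[A]]\subseteq s[s[X']]\cup s[s[s[X']]]$; more precisely, since $s[X']$ and $s[s[X']]$ are disjoint, $a_i\in X'\Leftrightarrow b_i\in s[X']$ and $a_i\in s[X']\Leftrightarrow b_i\in s[s[X']]$. I would then form the artificial training set $T=(A\times\{\lambda'_1\})\cup(C\times\{\lambda'_2\})$. The role of Lemma~\ref{lemma:third} ($A\cap s[s[A]]=\emptyset$) is precisely to guarantee that $T$ assigns a \emph{single} label to each sample, so that a supervised classifier $h$ separating $A$ (label $\lambda'_1$) from $C$ (label $\lambda'_2$) is well defined; its existence rests on the standing assumption (Remark following Def.~\ref{def:separable}) that disjoint classes are learnable for a suitable feature representation.

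The core of the argument is then to apply $h$ to the testing set $B$ and to certify that it recovers the split of $B$ into $s[X']$ and $s[s[X']]$. Here I would invoke the two key disjointness facts: Lemma~\ref{lemma:first} gives $s[s[X']]\cap A=\emptyset$ and Lemma~\ref{lemma:second} gives $s[X']\cap C=\emptyset$, while Condition~2 of Def.~\ref{def:transform} gives $s[X']\cap s[s[X']]=\emptyset$. Consequently, the part of $B$ lying in $s[X']$ never collides with a $\lambda'_2$-labeled training point, and the part lying in $s[s[X']]$ never collides with a $\lambda'_1$-labeled one, so $h$ should assign $\lambda'_1$ to the former and $\lambda'_2$ to the latter. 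Finally I would translate through the bijection: $h(b_i)=\lambda'_1\Rightarrow b_i\in s[X']\Rightarrow a_i\in X'$, to be labeled $\lambda_1$, and $h(b_i)=\lambda'_2\Rightarrow b_i\in s[s[X']]\Rightarrow a_i\in s[X']$, to be labeled $\lambda_2$, which reproduces the target partition of $A$ using no labeled sample of $A$ --- that is, unsupervised classification.

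I expect the main obstacle to be the certification in the previous paragraph. The disjointness lemmas only guarantee that the elements of $B$ never \emph{coincide} with training points of the opposing class; they do not, by themselves, force $h$ to place the learned decision boundary so that each half of $B$ falls on the correct side. This is genuinely a generalisation and inductive-bias question rather than a set-theoretic one, which is why the accompanying rationale speaks of ``high probability'' rather than certainty. I would therefore phrase this step as resting on the learnability hypothesis already assumed in the Remarks --- namely that the ``measurable difference'' induced by $s$ lets a trained classifier separate the supports of $X'$ and $s[X']$ --- and keep the remainder of the proof (the set memberships and the bijection bookkeeping) fully rigorous.
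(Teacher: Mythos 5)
Your proposal follows essentially the same route as the paper's own proof: the same artificial training set $T=(A\times\{\lambda'_1\})\cup(C\times\{\lambda'_2\})$, the same invocation of Lemmas \ref{lemma:first}--\ref{lemma:third} and Condition 2 of Def.\ref{def:transform} to justify classifying $B$ into $s[X']$ and $s[s[X']]$, and the same bijection bookkeeping to pull the labels back to $A$. Your closing observation --- that the disjointness lemmas alone cannot force the learned boundary to place $B$ correctly, so the step must rest on the learnability assumption --- is exactly the caveat the paper itself concedes in its rationale (``high probability'') and in the remark on misclassification following the theorem.
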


\begin{remark}
Unsupervised classification of $A$ is thus achieved due to the application of the splitting function $s$ to $A$ (twice), which allows \textcolor{black}{creating} an artificial \emph{training set} and to use \emph{supervised classification}. Hence, if we have a \emph{training set} and we use it to find a function
$h:X\to L$ for classifying a \emph{testing set}, by definition, this
is \emph{supervised classification}.
\end{remark}

\begin{remark}
Note, however, that some degree of misclassification of the elements in $B$ is possible, because the elements in $B$ will not be either in $A$ nor $C$. Since $A=V \cup W$ will usually be a strict subset of $X' \cup s[X']$, the elements $b_i\in s[X']$ will be outside $W$ (and outside $A$), and some of them may be classified incorrectly with the label $\lambda'_2$. Similarly, the elements $b_j\in s[s[X']]$ will be outside $s[s[V]]$ (and outside $C$), and some of them may be classified  incorrectly with the label $\lambda'_1$. This will lead to some degree of error in the unsupervised classification of $B$. The larger the difference introduced by the \emph{splitting function} $s$ is, the least likely the misclassification will become. In any case, it must be taken into account that machine learning classification processes are not error-free.
\end{remark}

\subsection{Application to Steganalysis}
The application of the framework described above to steganalysis can be carried out in the following way:
\begin{itemize}
\item The set $X$ is formed by samples of feature vectors of images (some of them stego and some of them cover).
\item The splitting function $s$ is the steganographic method we want to detect (using approximately the same embedding bit rate). We assume no knowledge about the secret keys of the steganographic scheme.
\item The set $X'$ represents the features of cover images, whereas the set $s[X']$ represents the features of stego images. Note that $X' \cap s[X']=\emptyset$, since an image cannot be cover and stego at the same time.

\item Note that the above condition also implies that successive applications of the embedding method produce disjoint sets. Since the condition applies to any subset $X' \subset X$, we also require, for example, that $s[X'] \cap s[s[X']] = \emptyset$. Some steganographic algorithms may not satisfy this condition. For example, LSB replacement\footnote{\textcolor{black}{Not to be mistaken for LSB matching.}} with an embedding bit rate close to 1 bit per pixel (bpp) may not produce significant differences between $s[X']$ and $s[s[X']]$. Since the proposed method is designed for targeted steganalysis, this condition can be considered known by the steganalyst.
\item  Another problem related to this condition is the fact that the ``splitting'' properties of the embedding process must be fulfilled for some specific feature set, which is used for training the artificial training set and then classifying the ``transformed samples''. For example, the \textcolor{black}{Adaptive Steganography by Oracle (ASO)} embedding algorithm \citep{Kouider:2013} is particularly designed to enhance its undetectability with respect to the features used by Ensemble Classifiers as proposed in \citep{Kodovsky:2012}. Hence, if we use this set of features for classification, it is likely that the suggested approach is not effective for ASO steganography.
\item The set $s[s[X']]$ is formed by the features of ``double stego'' images and, again, we have $X' \cap s[s[X']]=\emptyset$, since an image cannot be cover and ``double stego'' at the same time. The same applies for ``triple stego'' images: $X' \cap s[s[s[X']]]=\emptyset$.
\item The $s$-partable set $A=V \cup W$ is the testing set for the machine learning classification problem. This set is formed by some cover images (belonging to $V \subset X'$) and some stego images (belonging to $W\subset s[X']$).
\item The label $\lambda_1$ corresponds to cover images, whereas $\lambda_2$ refers to stego images.
\item The application of the splitting function $s$ to $V$ obviously produces stego images with no possible intersection with the set of cover images ($X'$). The same thing occurs after different applications of the splitting function, producing ``double stego" images, ``triple stego'' images, and so on.
\item $B=s[A]$ is formed by ``stego'' ($s[V]\subset s[X']$) and ``double stego'' ($s[W]\subset s[s[X']]$) images.
\item $C=s[B]=s[s[A]]$ is formed by ``double stego'' ($s[s[V]]\subset s[s[X']]$) and ``triple stego'' ($s[s[W]]\subset s[s[s[X']]]$) images. 
\end{itemize}
With these definitions, the method described in this section can be applied for the classification of stego and cover images, yielding an unsupervised steganalytic system. 
 
\section{Experimental Results}
\label{sec:Experimental}

This section presents the results obtained with the proposed method and a comparison with existing techniques in the \textcolor{black}{literature to illustrate} the performance of the suggested approach.
 
In order to test the proposed approach, we have selected eight distinct
image databases:
 
\begin{itemize}

\item \textcolor{black}{The BOSS database is the set of training images for the competition Break Our Steganographic System! \citep{BOSS}. This database is formed by 10,000 cover images with a fixed size of $512 \times  512$ pixels, obtained with seven different cameras \citep{Bas:2011}. The cover images  were provided together with a stego set, obtained after embedding the cover images using HUGO steganography with $0.40$ bpp. Hence, the whole training set was formed by 20,000 images. However, after the competition, a weakness was found 
\citep{Kodovsky:2011} in the creation of the stego set, which can be removed embedding with a different threshold ($T = 255$ instead of the default value $T = 90$ that was used in the BOSS challenge). Therefore, we have repaired the database of 20,000 images by replacing the stego set with the 10,000 cover images embedded with HUGO steganography and $0.40$ bpp, using the correct value of the threshold.}

\item \textcolor{black}{The RANK database (often referred to as BOSSrank) is the set of testing images from the competition Break Our Steganographic System! \citep{BOSS}. This database is formed by 1,000 cover images with a fixed size of $512 \times  512$ pixels. These images were first provided in the competition as a testing set and, hence, they were unlabeled and different from those of BOSS. 847 of the RANK images were taken using one of the cameras used to generate the BOSS database, but the remaining 153 images were taken with a camera not included in the BOSS database \citep{Bas:2011}. Thus, these 153 images exhibit the CSM problem with respect to the BOSS database. After the competition, the whole cover set was released. Again, we have repaired the stego set using the correct threshold for HUGO steganography with $0.40$ bpp (i.e., $T=255$). The repaired database is formed by 942 images, 471 of which are cover and 471 stego.}

\item The NRCS database consists of images from the National Resource Conservation 
System \citep{NRCS} with a fixed size of $2100 \times 1500$ pixels.

\item The ESO database consists of images from the European Southern Observatory
\citep{ESO} with variable sizes about $1200 \times 1200$ pixels.

\item The Interactions database consists of images from Interactions.org 
\citep{INTERACTIONS} with variable sizes about $600 \times 400$ pixels.

\item The NOAA database consists of images from the National Oceanic and Atmospheric 
Administration \citep{NOAA} with variable sizes about 
$2000 \times 1500$ pixels.  

\item The Albion database consists of images from the Plant Image Database of the
Albion College \citep{ALBION} with a fixed size of $1024 \times 685$ pixels.

\item The Calphotos database consists of images from the Regents of the University of 
California \citep{CALPHOTOS} with variable sizes about $700 \times 500$ pixels.
\end{itemize}

The steganographic algorithms used in the experiments are LSB matching \citep{Mielikain:2006}, with embedding bit rates of $0.25$ and $0.10$ bpp, HUGO \citep{Pevny:2010b}, with embedding bit rates of $0.40$ and $0.20$ bpp, and WOW \citep{Holub:2012}, with embedding bit rates of $0.40$ and $0.20$ bpp. 

Since we are addressing targeted steganalysis, the splitting function used for the proposed unsupervised approach is the same steganographic algorithm and the same embedding bit rate, unless otherwise explicitly specified. Note, however, that although we are using the same embedding bit rate, the secret key of the steganographic algorithm (which determines the exact embedded pixels) is not used in our splitting function (since the secret key is assumed unknown by the steganalyzer). 


The steganalysis approach taken in this paper stems from the well-known Rich Models framework \citep{Fridrich:2012}, which proposes a feature extraction step using multiple submodels, and a classification step based on Ensemble Classifiers \citep{Breiman:2001,Kodovsky:2011,Kodovsky:2012}. Ensemble Classifiers scale very well with the number of training samples and dimensionality, and provide similar accuracy compared to the well-known Support Vector Machine (SVM) approach if enough training samples are used \citep{Kodovsky:2011,Kodovsky:2012}. On the other hand, SVMs --particularly the Gaussian Kernel based SVM (G-SVM)-- are more accurate with a relatively small number of samples and dimensions. 

For a fair comparison between the traditional supervised approach and the proposed unsupervised method, we used two different classifications techniques. The supervised approach used the full Spatial domain Rich Model (SRM) feature set and was trained with the BOSS database. In most of the experiments, we used 19,000 images for training (9,500 cover and 9,500 stego), chosen randomly from the full database of 10,000 cover and 10,000 stego images. For a particular experiment (Section \ref{sec:ExperimentIV}), we used a different training set, as explicitly detailed below. 

On the other hand, \textcolor{black}{it must be taken into account that} the proposed method is designed for small sets of images, since we do not have a training set and, therefore, Ensemble Classifiers are not the best choice. For this reason, we used a G-SVM together with a standard feature selection phase (34,671 dimensions are too many for training a G-SVM). More specifically, the feature selection step chooses the best 500 features based on their ANOVA $F$-value. Hence, we provide the best possible settings for both the supervised and the proposed methods, which allows a fair comparison between them.

Whenever SVMs were used, we chose an SVM with a Gaussian kernel. This classifier must be adjusted to provide optimal results. In particular, the values for the parameters $C$ and $\gamma$ must be selected to provide the classifier with the ability to generalize. This process was carried out as described in \citep{Hsu:2003}. For all the experiments with SVMs in this paper, we used cross-validation on the training set applying the following multiplicative grid for $C$ and $\gamma$:
\[
\begin{split} 
C & \in \left\{2^{-5}, 2^{-3}, 2^{-1},2^1,2^3,\dots,2^{15}\right\},\\
\gamma & \in \left\{2^{-15}, 2^{-13}, 2^{-11},\dots,2^{-1},2^1,2^3\right\}.
\end{split}
\]

\subsection{Same Number of Cover and Stego Images and CSM}
\label{sec:ExperimentI}

To compare the results obtained with the supervised method and the proposed approach, we carried out different experiments. As mentioned above, these experiments were performed using Ensemble Classifiers in the case of supervised classification and G-SVM with feature selection for the proposed method. 

In this section, all the experiments with supervised classification were performed training with a database of 19,000 images. The remaining 1,000 images \textcolor{black}{were used as the testing set} when the training and testing databases \textcolor{black}{matched. For the proposed method, all the experiments were carried out} using only the testing set. 

In the first group of experiments, the testing sets consisted of 250 images, 125 of which were cover and the other 125 were stego. The aim of this experiment was to compare how the CSM problem affects the results obtained with supervised classification in contrast with those provided by the proposed method. For this reason, we used the BOSS database as training set for the supervised method and all the databases (including BOSS) for testing. When the testing database is \textcolor{black}{different from BOSS, the CSM problem came along.} 

In Table \ref{tab:comparative_CSM}, we can see the results using LSB Matching with embedding bit rates of 0.25 and 0.10 bpp, HUGO \citep{Pevny:2010b} with embedding bit rates of 0.40 and 0.20 bpp and WOW \citep{Holub:2012} with embedding bit rates of 0.40 and 0.20 bpp. In all the cases, we provide the results using both supervised classification (column ``SUP'') and the proposed method (Artificial Training Sets, column ``ATS'').

It can be observed that the suggested unsupervised approach provided the best results for almost all cases (the best results are boldfaced). Note, also, that the CSM problem is completely avoided with the proposed method, since it does not need a training database, which represents one of the major drawbacks of supervised steganalysis. The worst case occurs with the NRCS database and HUGO with an embedding bit rate of $0.20$ bpp, for which none of the two methods \textcolor{black}{succeeded in classifying} the images correctly (\textcolor{black}{yielding} classification accuracies about $0.5$, i.e., equivalent to random guessing). 

One may think that the results obtained with the suggested method and those of the supervised approach should be almost identical when no CSM problem occurs. However, even when there was no CSM (first row of Table \ref{tab:comparative_CSM}), \textcolor{black}{we obtained} a classification accuracy difference in favor of the suggested approach. We think that the reason for this difference, even when there is no CSM, is the fact that the artificial training step required by our method is performed exactly with the same images of the testing set. Even within the same database, there will be significant differences between different images.  \textcolor{black}{This will lead to relevant differences between the training and the testing sets and degraded classification results when traditional supervised methods are used.} The suggested  ``artificial training'' \textcolor{black}{step} is applied to exactly the same images used for testing, which \textcolor{black}{prevents} this ``degradation''.

Another experiment was conducted using the RANK database. This is the testing database of the Break Our Steganographic System! competition, formed by 1,000 images, which contain both samples from the BOSS database domain and others from \textcolor{black}{a different} domain. Hence, this experiment also exhibits the CSM problem. The results are shown in the first two rows of Table \ref{tab:hugo40_bossrank} for both the supervised approach and the proposed method. In this case the testing set was formed by 471 cover and 471 stego images. We can notice that the accuracy of the proposed method is, again, much higher than that of the supervised counterpart. 

\subsection{Reduced and Unbalanced Number of Stego Samples}
\label{sec:ExperimentII}

In this section, we illustrate a real-world situation of steganalysis, namely, the lack (or a reduced number) of stego samples in the testing set. \textcolor{black}{Generally}, communicating parties that use steganography do not embed information in many images. Therefore, we can probably find enough cover sources from these parties, but \textcolor{black}{relatively fewer} stego images. 


\textcolor{black}{To test this scenario}, we carried out different experiments similar to those presented in the previous section, but using 125 cover images and only 50 stego images in the testing set. After that, other experiments were carried out using 125 cover and only 10 stego images. \textcolor{black}{The tested steganographic systems were LSB matching with $0.25$ and $0.10$ bpp, and HUGO with $0.40$ and $0.20$ bpp.} 

The results for 50 stego images are shown in Table \ref{tab:results_LSBM_50} for LSB matching and in Table \ref{tab:results_HUGO_50} for HUGO. The results using 10 stego images are shown in Table \ref{tab:results_LSBM_10} for LSB matching and in Table \ref{tab:results_HUGO_10} for HUGO. These tables do not only show the classification accuracy (``Acc''), but also the details about true positives (``TP''), true negatives (``TN''), false positives (``FP'') and false negatives (``FN''). In both cases, we can see that  the accuracy decreased compared to that obtained with a greater number of stego samples, \textcolor{black}{but a remarkable level of detection was still achieved}. For example, using the BOSS database and LSB matching with a $0.10$ bpp bit rate, we obtained a classification accuracy of 94\% when the number of stego and cover images was the same (as shown in Table \ref{tab:comparative_CSM}). When we used only 50 stego images, the accuracy decreased to 78\%. \textcolor{black}{Finally,} when we used only 10 stego images, the accuracy was further reduced to 70\%. Although the accuracy decreased, a remarkable level of detection \textcolor{black}{was achieved} even \textcolor{black}{for} only 10 stego images. In particular, we notice that the number of true positives was very high with LSB matching for both cases (10 and 50 stego images), and still quite acceptable for HUGO steganography with 50 stego images. This means that a stego image was very likely to be detected (and further analysis might be carried out to discard false positives). 

The worst situation for the proposed system came out when the number of stego images was reduced to 10. In this case, the results obtained with the supervised approach were better than those of the proposed method when no CSM occurs (the first row of Tables  \ref{tab:results_LSBM_10} and \ref{tab:results_HUGO_10}). In case of CSM (the other rows), the proposed system provided better results than the supervised method for some testing databases and worse results for others. This illustrates that  the performance of the proposed system decreases when the ratio between the number of stego images and the total number of testing images is very small. This situation is \textcolor{black}{further analysed by means of} other experiments in this section.

We also performed an experiment using the RANK database with two different ratios between cover and stego images: 70\%/30\% (471/202) and 90\%/10\% (471/52). The results, shown in Table \ref{tab:hugo40_bossrank}, illustrate that the proposed method provided excellent detection accuracy (greater than or equal to 80\%) for these cases. This shows that the accuracy is still high for unbalanced testing sets. The next experiment was performed in order to determine the threshold to obtain convenient detection results \textcolor{black}{(classification accuracy) with the proposed approach as the ratio between stego and cover images is considered.}

Since the classification accuracy seems to decrease when we have an unbalanced testing set, \textcolor{black}{considering the relative number of cover and stego images,} we carried out the next family of experiments to investigate how the proposed method behaved with different ratios of stego images. In Table \ref{tab:hugo40_step5}, the detection accuracy obtained with an increasing ratio of stego images \textcolor{black}{is shown for a testing set taken from the BOSS database embedded using HUGO with $0.40$ bpp.} We began with a testing database of 125 cover and 0 stego images (125/0) and, in each step, we removed five cover and added five stego images, until a testing set of 0 cover and 125 stego images (0/125) was \textcolor{black}{formed} for the last experiment. We can see that the proposed ATS method provided convenient results (accuracy over 70\%) when the ratio of cover and stego images was roughly between 10\% and 95\%. The accuracy decreased when either the number of stego or cover images was very small (less than 10\% of stego images or less than 5\% of cover images). A particularly difficult situation occurred when the testing set was formed exclusively by cover images. Even in that case, the \textcolor{black}{proposed} approach tried to separate the testing set into cover and stego images, leading to poor classification results (only 26\% of the testing images were correctly classified as cover images). This situation shall be prevented, since testing sets formed only by cover images can be very usual in real-world scenarios. This problem shall be addressed in the future research to avoid  \textcolor{black}{such a high level} of false positives. 


Another interesting real-world scenario comes along when the total number of testing images is very small. This situation may occur, for example, when \textcolor{black}{only} a few images are found in a USB stick and they have to be evaluated by a steganalyst. \textcolor{black}{In order to} test this scenario, we carried out a set of experiments using very small sets \textcolor{black}{(with only 20, 15, 10 and 8 images)} with variable cover/stego ratios.  In Table \ref{tab:results_few1}, the results obtained with 1) ten cover and ten stego images, and 2) five cover and five stego images, \textcolor{black}{are shown}. Similarly, in Table \ref{tab:results_few2}, we can see  the results obtained with 1) five cover and ten stego images, and 2) ten cover and five stego images. Finally,  Table \ref{tab:results_few3}, \textcolor{black}{shows} the results obtained with 1) five cover and three stego images, and 2) three cover and five stego images. The algorithms and embedding bit rates used in the experiments were LSBM with 0.25 bpp and HUGO with 0.40 bpp. The results shown in the tables are the average values obtained after repeating each experiment ten times with a different selection of images. Because of this, the true and false positive and negative values are not integers. \textcolor{black}{It is worth pointing out that the results are remarkable even when the number of stego/cover images were 3/5 and 5/3,} with detection accuracies close to or higher than 70\%, except for the NRCS database. Taking into account the reduced size of these testing sets, we can conclude that the reliability of the proposed system is quite \textcolor{black}{noteworthy}.

\subsection{Testing Set from Mixed Databases}
\label{sec:ExperimentIII}

This section presents an even more challenging situation for the proposed steganalysis approach. In the next experiments, the images (both cover and stego) came from different databases with an uneven proportion. The mixed data set was formed by 280 images as follows: 70 images from BOSS, 60 images from INTE, 50 images from CALP, 40 images from ESO, 30 images form ALBN, 20 images from NOAA and ten images from NRCS. Then, we selected 140 of these images as cover and the other 140 images as stego, randomly. After that, we embedded a message in the 140 images selected as stego using LSB matching with an embedding bit rate of $0.10$ bpp, and removed the corresponding original (cover) images from the testing set. We can see the results after applying the proposed steganalysis method in the first row of Table \ref{tab:results_LSBM_10_MIX}, which shows that the classification accuracy is still quite large (79\%), despite the hostile scenario.

After that, we applied the same procedure as above but using 50 stego images only (and a total of 190 images between cover and stego). In this case, the results obtained after applying the proposed steganalysis method are shown in the second row of Table \ref{tab:results_LSBM_10_MIX}. Since the images in the different databases have different sizes, we repeated the same experiment but with all the images clipped to $512\times 512$ pixels, as shown in the last two rows of the table.

We can see that the only problem \textcolor{black}{with the proposed approach was} a relatively high number of false positives. However, even in this challenging situation, the accuracy of the suggested method was still above 70\%. Again, the most remarkable property of these results is the ability of the method to provide a very large number of true positives. Note that using images with the same size the accuracy increases about 10 percentage points for the proposed method and 20 percentage points for the supervised approach. In any case, the best classification results are always those provided by the proposed method.

On the other hand, the supervised approach, trained with the BOSS database consisting of 19,000 images, does not yield good classification results, probably due to the CSM problem. Since the other databases do not have that many images, including them in the training set would not be a convenient strategy, since the training set would be unbalanced (it would contain many more images from BOSS than from the other databases). 
Nevertheless, there are some techniques in the state of the art that allow to deal with the CSM problem with supervised classification, such as the methods of \citep{Pasquet:2014}, in which a clustering approach is used, and \citep{Lubenko:2012}, in which millions of images are used for training.

\subsection{Testing Set with Mixed Embedding Bit Rates}
\label{sec:ExperimentIV}

As detailed in Sections \ref{sec:ATS_method} and \ref{sec:TheoreticalAnalysis}, the proposed approach requires using the targeted steganographic system as a splitting function \textcolor{black}{with} approximately the same embedding bit rate. The experiment presented in this section uses three of the selected databases and the stego images were generated with the same steganographic method (LSB matching) but with five different embedding bit rates: $0.25$, $0.20$, $0.15$, $0.10$ and $0.05$ bpp. We took 250 images for the testing set: 125 cover images (without any change) and 125 stego images obtained using the embedding method and one of the five possible embedding bit rates. Once embedded, the original sources were removed from the testing set. The stego image set was thus formed by 25 images for each embedding bit rate.

The \textcolor{black}{main} difficulty with addressing this steganalysis problem is \textcolor{black}{the fact that} we do not know the embedding bit rate. Even worse, the embedding bit rate is different for different images in the stego set. However, the suggested approach requires a splitting function that is approximately the same that the targeted steganographic method. Thus, we need to choose an appropriate splitting function. 

It can be easily understood (from the analysis presented in Section \ref{sec:TheoreticalAnalysis}) that using LSB matching with a low embedding bit rate will not conveniently separate the testing set. A stego image with a $0.10$ bpp embedding bit rate would have 10\% of pixels selected for embedding and, on average, half of them (a 5\% of the total) would be modified compared to the corresponding cover image (the other half will already have the correct value in the LSB). If this stego image is embedded again, with an embedding bit rate of $0.10$ bpp, the number of modified pixels with respect to the corresponding cover image would be, approximately, $5\% \text{ (already modified)} + 95\% \cdot 5\% \text{ (modified by the second embedding)}= 9.75\%$. This number is not exact, since some already modified pixels could be reverted to the original value in the second embedding, but it suffices to illustrate the problem. Similarly, the number of pixels modified for a cover image embedded with $0.20$ bpp is 10\% on average. The ``double stego'' image (embedded with $0.10$ bpp twice) will have differences of up to $\pm 2$ for a few pixels (about a $0.25\%$ of them), whereas a $9.5\%$ of them will have differences of $\pm 1$ compared to the cover image. Hence, it would be very difficult to separate ``double stego'' images with $0.10$ bpp from stego images with $0.20$ bpp. This means that the condition $X' \cap s[X']=\emptyset$ would not be satisfied. 

On the other hand, a too large embedding rate, e.g. $0.50$ bpp, would make it difficult to separate the $0.05$ bpp stego images from the cover images: a $0.05$ bpp stego image embedded again with a $0.50$ bpp bit rate would modify approximately $2.5\%+97.5\%\cdot 25\%=26.875\%$ of pixels on average, compared to the corresponding cover image, most of them with a difference of $\pm 1$, whereas a cover image embedded with a $0.50$ \textcolor{black}{bpp} bit rate modifies 25\% of the pixels, on average, with a difference of $\pm 1$.

Hence, the selection of the appropriate embedding bit rate for the splitting function in this situation is critical. To overcome this difficulty, for this particular experiment, we used LSB matching with $0.25$ bpp as a splitting function to check if the separation property could still be achieved. 

\color{black}

The results, shown in Table \ref{tab:results_LSBM_MBR}, illustrate that this approach leads to excellent  classification results, greater than 85\% in accuracy, for three different databases. For the supervised method, we used a training set formed with 9,500 cover images, and $5\cdot \text{9,500}=\text{47,500}$ stego images, since we embedded the chosen images once for each tested embedding bit rate (i.e. $0.25$, $0.20$, $0.15$, $0.10$ and $0.05$ bpp). The training was carried out with the BOSS database and, hence, the results of the last two rows of the table also exhibit the CSM problem. It must be pointed out that, although the training set contained stego images embedded with all the tested bit rates, the supervised approach could not classify the testing set even when there was no CSM problem (first row).

This experiment shows that the proposed method can still be applied even if we do not know the exact embedding bit rate of the targeted steganographic system. However, a method for selecting the optimal embedding bit rate for the splitting function is required and must be addressed in the future research.

\color{black}

\subsection{Testing Set with Unknown Message Length}
\label{sec:ExperimentV}

The usual scenario in steganography assumes, by the Kerckhoffs' principle, that the steganalyst knows all details about the steganographic channel except the secret keys. However, in a real-world scenario, some details, such as the embedding \textcolor{black}{bit} rate, are rarely known. This problem is often referred to as detecting messages of unknown length \citep{Pevny:2011}.

In the state of the art, there are quantitative steganalyzers that try to discover the embedding \textcolor{black}{bit} rate. In \citep{Pevny:2011} and \citep{Kodovsky:2013}, the authors use a regression algorithm for obtaining a mapping $F: \mathcal{F} \mapsto \mathcal{P} \subseteq \mathbb{R}$, where $\mathcal{F} \equiv \mathbb{R}^n$ is a feature space representation of images and $\mathcal{P}$ is a compact subset representing the space of embedding \textcolor{black}{bit} rates. Unfortunately, there is not a direct way of applying this framework in the proposed method. The suggested system is unsupervised and there is no training set with which we can train the regression algorithm. Despite that, in this section, we present some experiments about how to deal with a testing set of images for which the embedding bit rate is not fully known, which is a challenging situation for the proposed system. \textcolor{black}{Below}, we propose a methodology based on the measure of distance between the centroids of the different classes. 

As a result of applying the proposed method (see Section \ref{sec:ATS_method} and Fig.\ref{fig:ATS_method}), we obtain a classification of $B$ into stego and ``double stego'' images and, by the existing bijection between the elements of $B$ and $A$, a classification of $A$ into cover and stego images. Therefore, if the classification is successful, we expect that the elements of the stego part of $A$ be similar to those of the stego part of $B$ (see Fig.\ref{fig:MEd}). Likewise, we expect that the elements of the cover part of $A$ be dissimilar to those of the stego part of $A$ (see Fig.\ref{fig:MEa}). To exploit this idea, we can calculate the centroid of the cover part of $A$, namely $C_{A_\mathrm{cover}}$, the centroid of the stego part of $A$, namely $C_{A_\mathrm{stego}}$, and the centroid of the stego part of $B$, namely $C_{B_\mathrm{stego}}$, \textcolor{black}{and compute the distances between them: $d(C_{A_\mathrm{stego}}, C_{B_\mathrm{stego}})$ and $d(C_{A_\mathrm{cover}}, C_{A_\mathrm{stego}})$.}  \textcolor{black}{This computation can be carried out using a standard distance measure $d$,} such as the  Euclidean distance:
$$d(x,y)=\sqrt{\sum_{i=0}^{n}(x_i-y_i)^2}.$$ 

We expect short distances for $d(C_{A_\mathrm{stego}}, C_{B_\mathrm{stego}})$, because they are similar, and large distances for $d(C_{A_\mathrm{cover}}, C_{A_\mathrm{stego}})$, since they are dissimilar. Then, we can compute a score value $S$ as follows:

$$S=\frac{d(C_{A_\mathrm{stego}}, C_{B_\mathrm{stego}})}{d(C_{A_\mathrm{cover}}, C_{A_\mathrm{stego}})}.$$ 

This value will be small when the classification is correct, that is, when we have managed to guess the correct algorithm and embedding bit rate. Therefore, the steganalyst can select a list of tentative bit rates and try them \textcolor{black}{sequentially.} The bit rate that provides the lowest score will most \textcolor{black}{possibly} be the right one (or close to it).

Although there are several open questions with respect to this methodology, which shall be addressed in the future research, some results are shown in Table \ref{tab:boss_unknown}. These experiments were performed using 200 images of the BOSS database, 100 of which were cover and 100 stego. Some tests were also carried out with only cover images. In that case, we only used 100 images in the testing set.  The distances were calculated using 50 features, selected from the best $F$-values of an ANOVA test. The details of the different experiments are discussed below.

We carried out experiments with an unknown message length for three different \textcolor{black}{steganographic} algorithms: LSB matching, HUGO and WOW, with an embedding \textcolor{black}{bit rate of} $0.40$ bpp in all cases. The steganalyst needed to find out what the real embedding bit rate was and, for this purpose, he/she prepared a list of tentative bit rates, namely $0.10$, $0.20$, $0.30$, $0.40$, $0.50$ and $0.60$ bpp. The procedure is straightforward: the steganalyst only had to apply the proposed method one time for each tentative bit rate and keep the results with lowest score. As shown Table \ref{tab:boss_unknown}, in all three cases, the lowest score indicated the \textcolor{black}{correct} bitrate. 

\color{black}

\subsection{Real-time Construction of the Testing Set}
\label{sec:ExperimentVI}

Another relevant scenario to consider for the proposed scheme is how to proceed when we do not have access to the whole set of images (formed with stego and cover samples) at the same time. This can be the typical situation that arises when we are eavesdropping the communications between two (or more) parties. If these parties exchange a few images (or even one image) from time to time, it is not possible to run the proposed classification method with the complete set of images, as shown in the previous sections. 

In this case, the testing set $A$ will be built dynamically. To deal with this realistic scenario, we propose the following procedure:

\begin{enumerate}
\item Collect the images of the testing set $A$ one by one until $|A|\geq n_{\min}$, where $n_{\min}$ is a minimum number of images required to apply the method. Hence, the set $A=\{I_1,I_2,\dots,I_{n_{\min}}\}$ is built, where $I_j$ stands for the image obtained at the $j$-th iteration.
\item \label{classify} Classify the set $A$ into cover ($V$) and stego ($W$) images applying the proposed method. Output the label of each image (``cover'' or ``stego'').
\item When a new image $I_k$ is obtained, add the new image into the set $A:=A \cup \{I_k\}$, and repeat Step \ref{classify}.  
\end{enumerate}
As shown in Section \ref{sec:ExperimentII}, $n_{\min}$ can be very small and still provide remarkable detection accuracy. In the experiments presented below, the value $n_{\min}=10$ has been used.

Although the strategy of repeating the classification with the whole set may appear simplistic, in fact, it is the best option from the accuracy point of view. As shown below by means of several experiments, the classification accuracy increases with the number of classified images. Hence, when a new image is obtained, it is better to classify the whole set of images again and output the result of the last classification. Needless to say, this strategy also requires more computational effort. Accuracy is thus obtained in exchange for computational cost. We consider this strategy realistic, since the classification of even hundreds of images can be carried out in just a few minutes with standard hardware. If reduced computation time is a strong requirement in a real-time implementation of the scheme, accuracy can be sacrificed by selecting a subset of the collected images for each classification.

In addition, this re-classification of the whole testing set each time a new image is obtained produces a side effect: for each image, we do not only have the last classification result (as ``cover'' or ``stego'') but also the results obtained in all the previous classifications (iterations). In fact, assuming that we obtain the images one by one, if $n=|A|$ is the current number of images (i.e., $I_n$ is the last image that has been included into the set $A$), we have the following number of classification results (labels), $n_l$, per each image: 
\[
n_l(I_k) = \left\{\begin{array}{ll}
\max(0,n-n_{\min}+1), & \text{if }k\leq n_{\min},\\
\max(0,n-k+1), & \text{otherwise}.
\end{array}\right.
\]
Now, we can compute $m_l(I_k)$ as the number of times that each image has been classified \textbf{with the same label} as in the last classification experiment. In the best case for classification ``confidence'', we will have $m_l(I_k)=n_l(I_k)$, that is, the image $I_k$ has been classified with the same label in all the iterations. In the worst case, we will have $m_l(I_k)=1$, meaning that the previous $n_l(I_k)-1$ classification experiments yielded the opposite label compared to the current experiment. With these considerations, we can define a simple ``confidence level'' for the classification result of each image:

\begin{equation}
c(I_k)=\frac{m_l(I_k)}{n_l(I_k)}, \text{ if } n_l(I_k)>0.
\label{eq:confidence}
\end{equation}

The value of the confidence level provided in Equation \ref{eq:confidence} satisfies $c(I_k) \in (0,1]$. The closer $c(I_k)$ to 1, the more confident we can be about the classification of $I_k$ and, conversely, the closer $c(I_k)$ to 0, the less confident we can be about the label assigned to that image. One of the advantages of such a measurement is that it can be computed without knowledge of the true category of each image. The confidence level of the classification of an image $I_k$ always begins with $c(I_k)=1$ the first time it is classified, and it will typically have a lower value as it is re-classified each time a new image is included into the set $A$. The values provided by this indicator are also analysed in the experiments below.

\begin{figure}[!ht]
  \subfloat[124 cover and 31 stego images]{\label{fig:RTa}
  \ifpdf \includegraphics[width=0.45\textwidth]{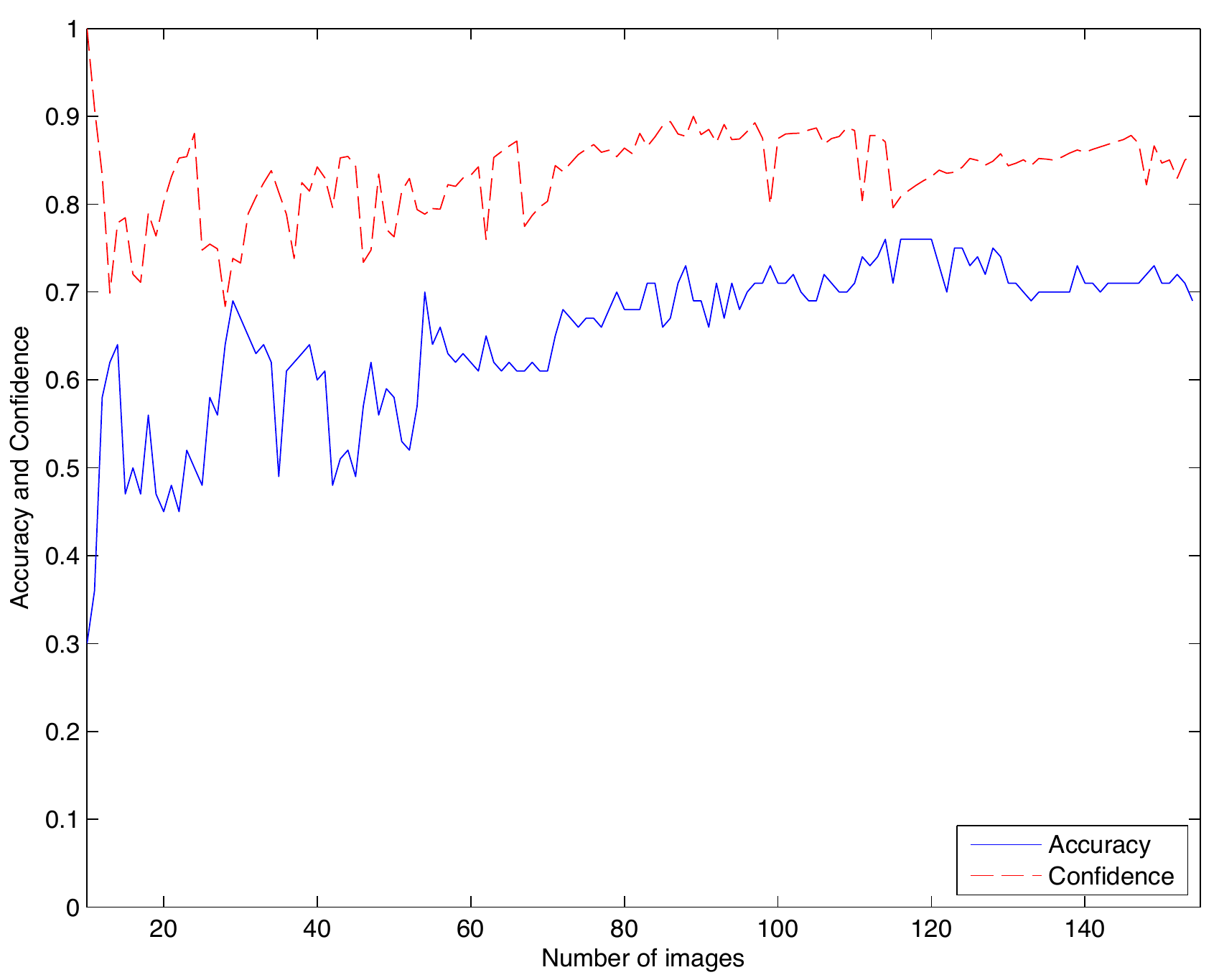}
  \else \includegraphics[width=0.45\textwidth]{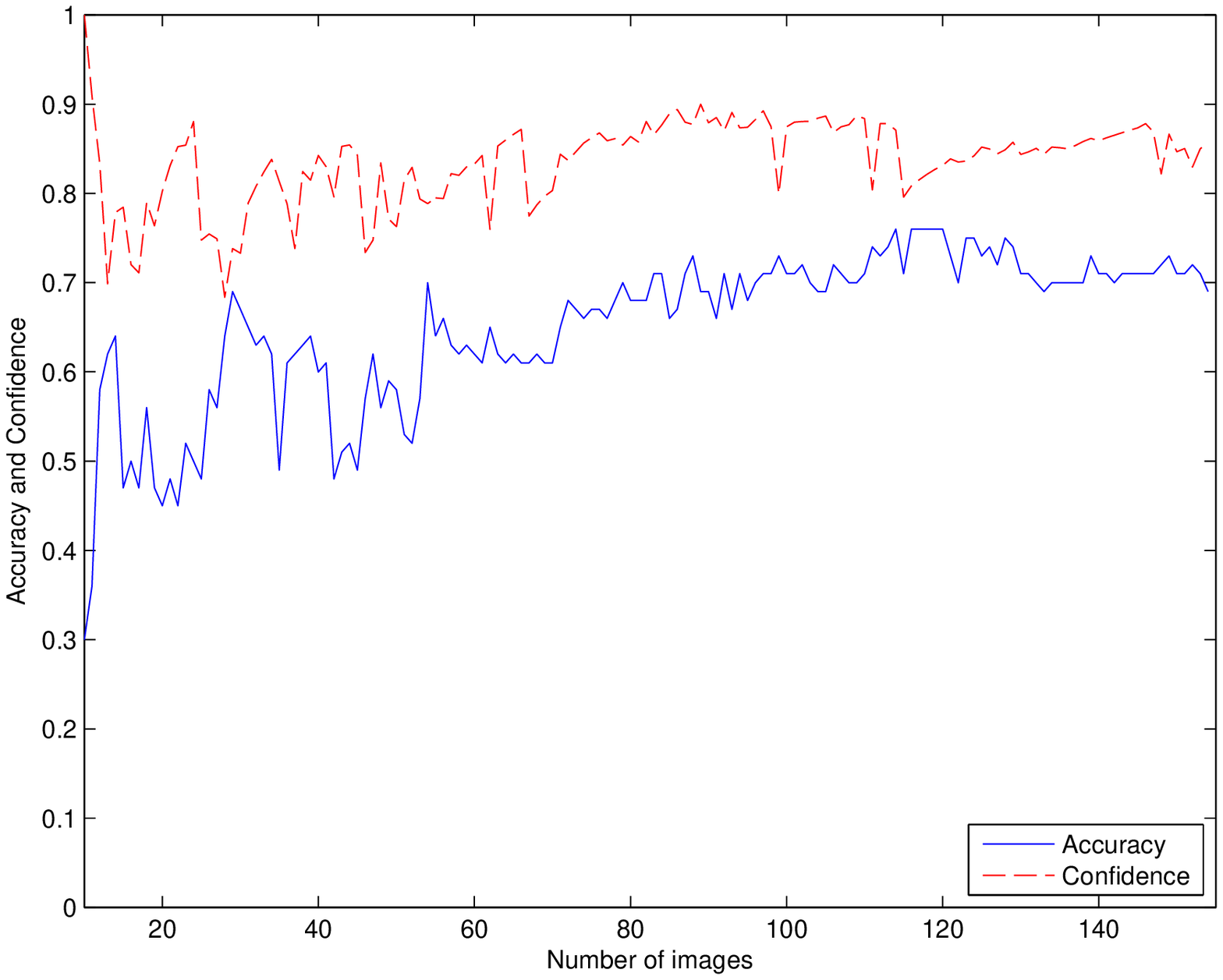}
  \fi} \qquad
  \subfloat[105 cover and 45 stego images]{\label{fig:RTb}
  \ifpdf \includegraphics[width=0.45\textwidth]{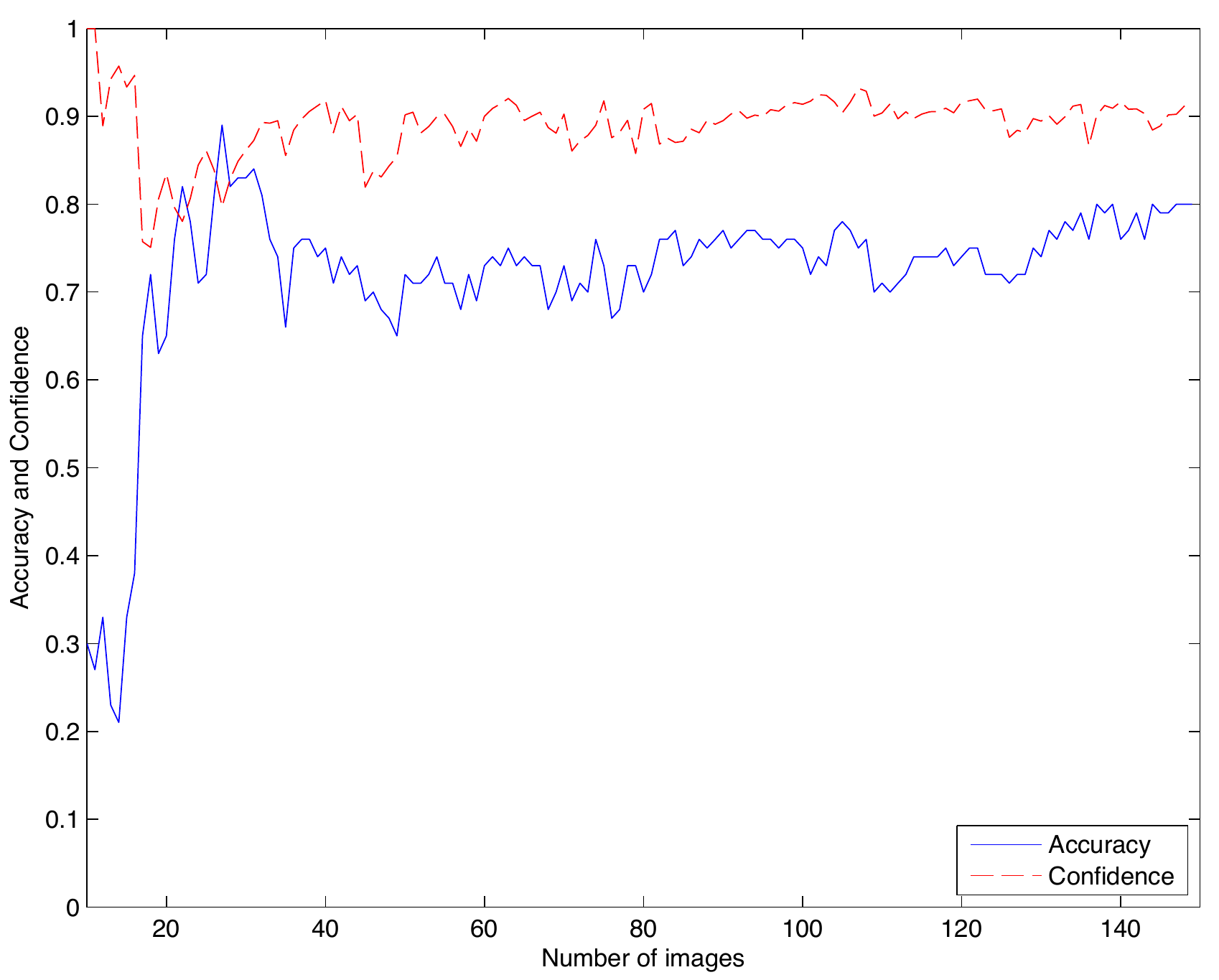}
  \else \includegraphics[width=0.45\textwidth]{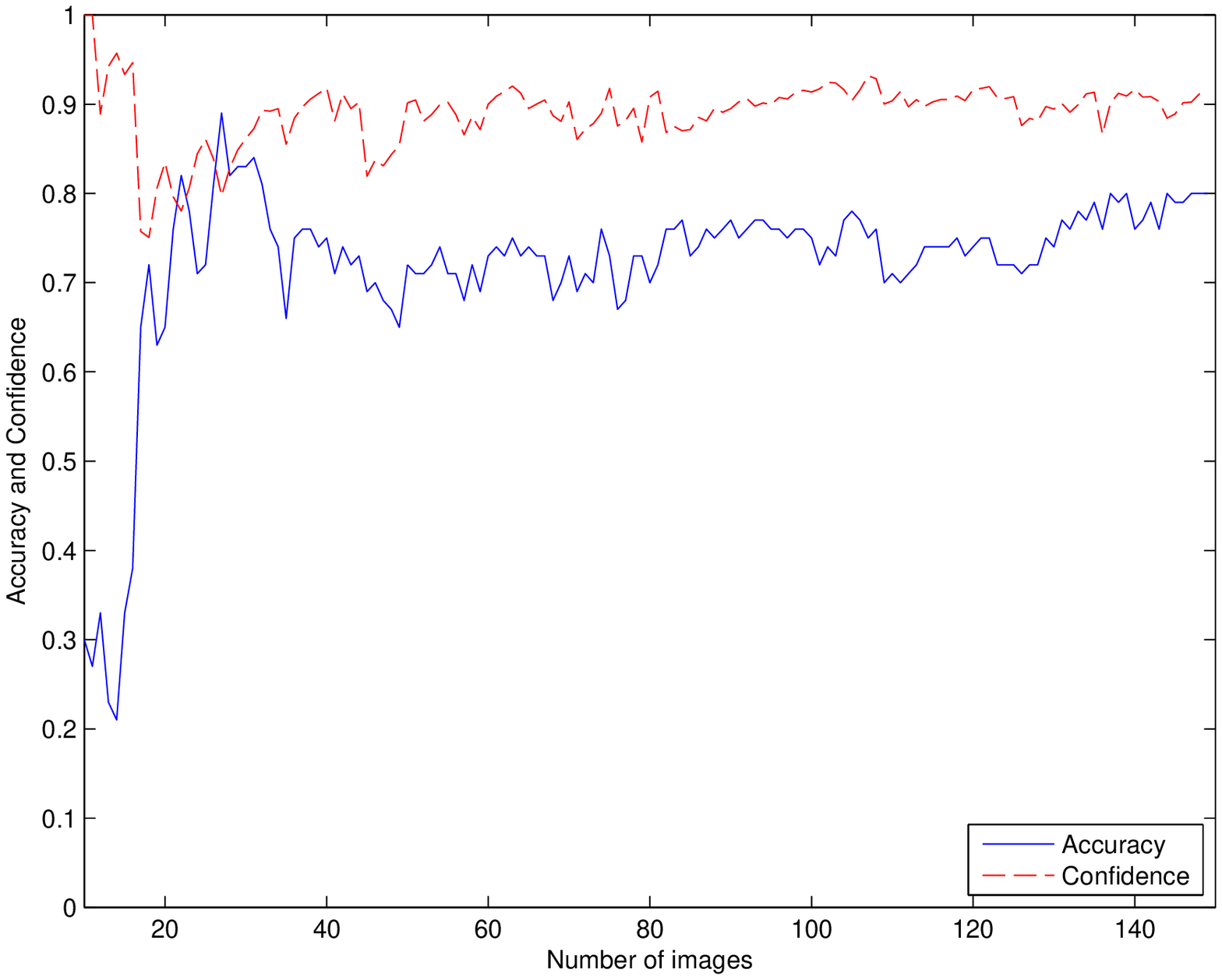}
  \fi} \\
  \subfloat[125 cover and 125 stego images]{\label{fig:RTc}
  \ifpdf \includegraphics[width=0.45\textwidth]{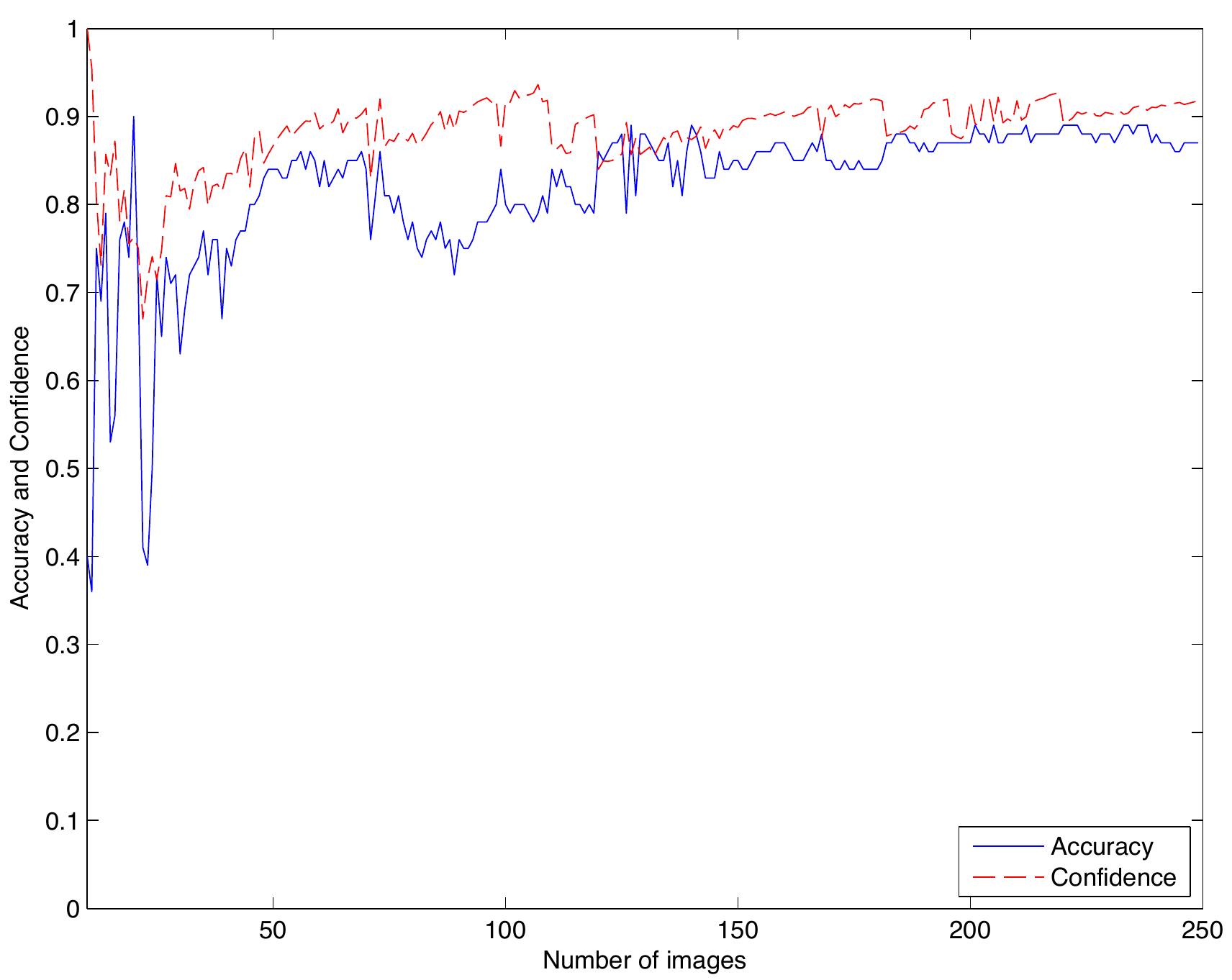}
  \else \includegraphics[width=0.45\textwidth]{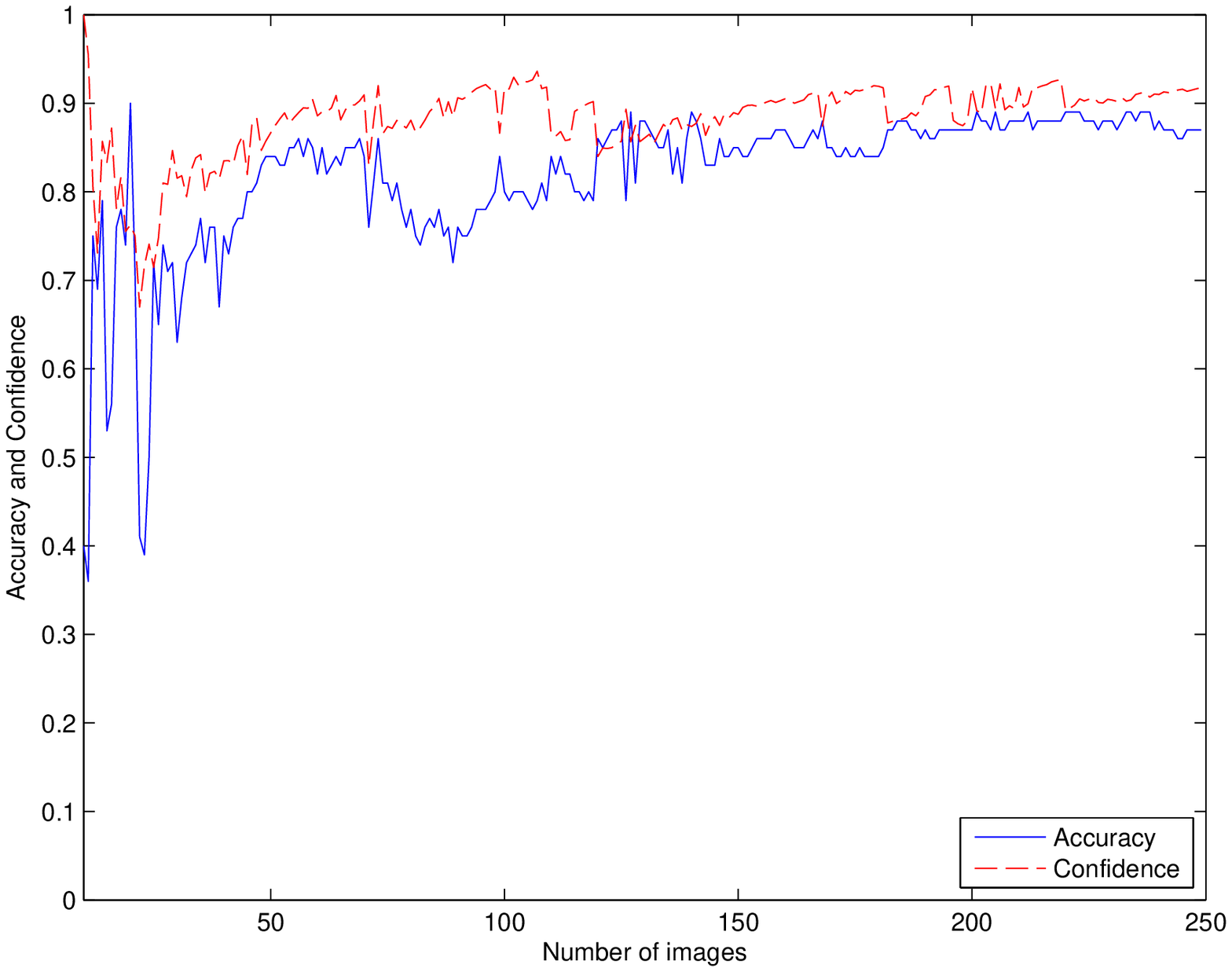}
  \fi} \qquad 
  \subfloat[105 cover and 45 stego images (average of 100 simulations)]{\label{fig:RTd} 
  \ifpdf \includegraphics[width=0.45\textwidth]{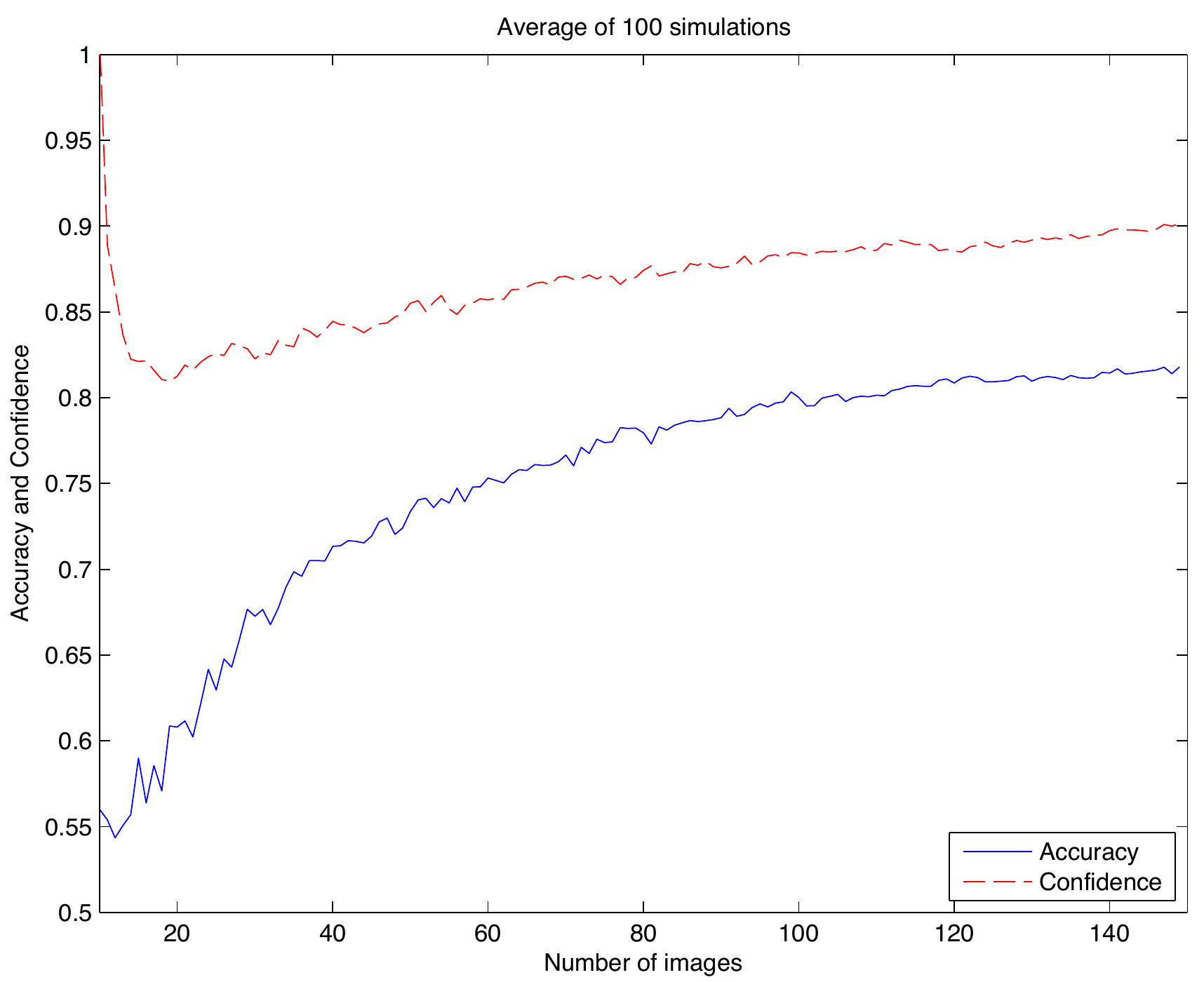} 
  \else \includegraphics[width=0.45\textwidth]{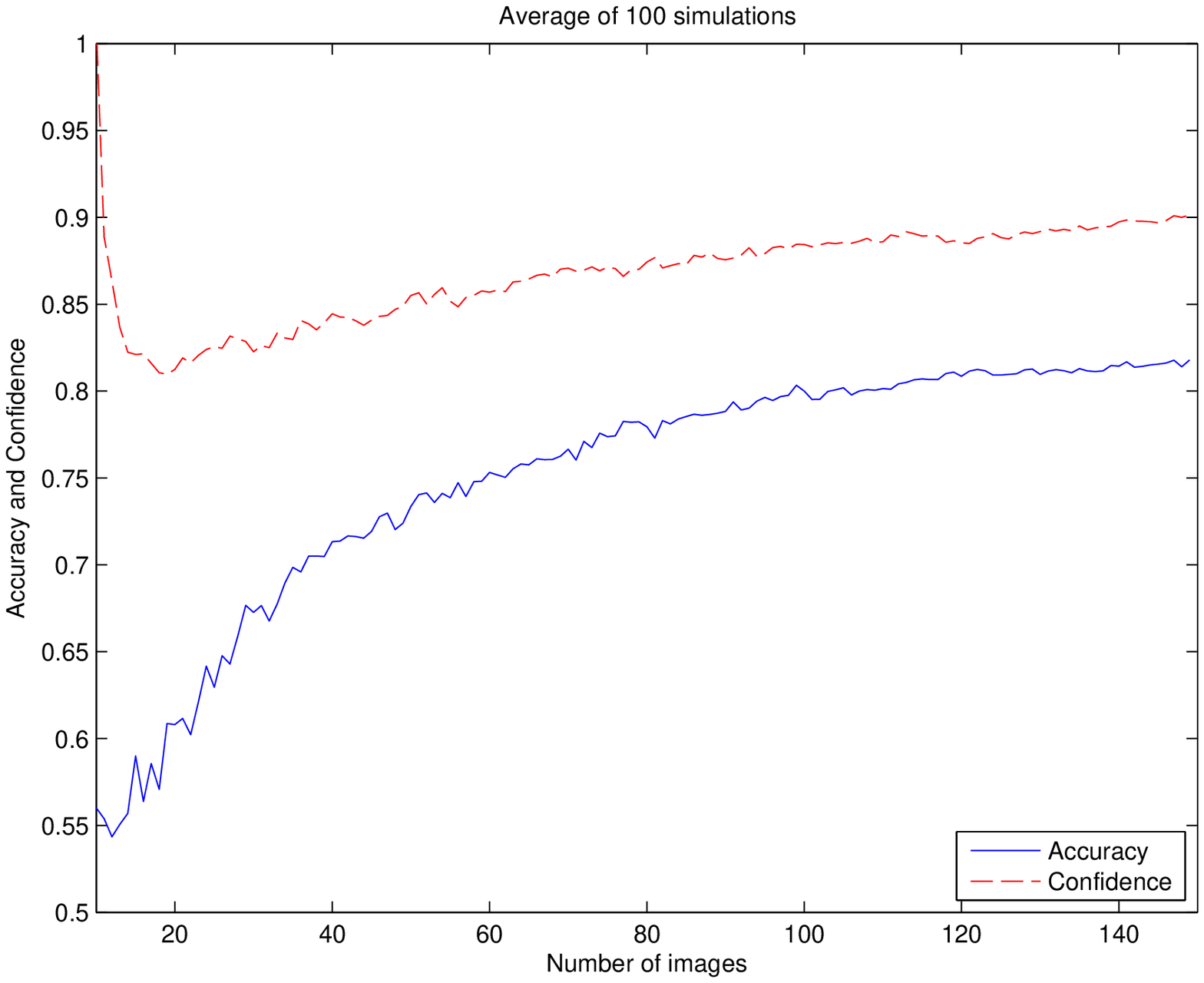}
  \fi}  
  \caption{Accuracy (solid line) and confidence level (dashed line) for testing sets constructed in real time for the BOSS database using HUGO with $0.40$ bpp}
  \label{fig:realtime}
\end{figure}

The first experiment to test this scenario was simulated taking 155 images, 124 of which were cover and the remaining 31 have been embedded using HUGO steganography with $0.40$ bpp (i.e. 20\% of the images were stego). At each iteration, a random image was selected from the set and added into $A$. As remarked above, we set $n_{\min}=10$. Hence, the first classification experiment was carried out when the 10-th image was selected. The results, in terms of average classification accuracy and average confidence level are shown in Fig.\ref{fig:RTa}, using a solid line for accuracy and a dashed one for confidence. At each iteration, the average accuracy and the average confidence were computed for the set of available images. As accuracy is concerned, the classification results with a few images (approximately less than 50) were low, but when the number of images reaches this threshold, the accuracy increased, reaching a 70\% ($0.7$) in the last iteration. On the other hand, the confidence, started from a high value, showed a significant decrease in the first few iterations and at some iteration (again about 50 images) started a quite regular increase. We can also observe a strong correlation between both curves, specially when the number of collected images was significant enough. This situation is really remarkable, since in a real experiment we would not know the real accuracy, but we would be able to compute the confidence measurement, since it only depends on the past classification results. When the confidence reaches a ``steady state'' behavior, the same occurs with the accuracy. Hence, this provides with a strong indicator of the quality of the classification results.

The same procedure was applied for other testing sets of images. Fig.\ref{fig:RTb} shows the results obtained with 150 images, 105 of which were cover and the remaining 45 stego (i.e. 30\% of stego images). The behavior of both variables was similar to that of the previous experiment, but the threshold in the number of images required for a high accuracy and confidence was lower (about 30 images). In this case, the final accuracy was also higher, about 80\% ($0.8$).

Fig. \ref{fig:RTc} provides the results obtained for 250 images, 125 of which were stego (i.e. 50\% of stego images). The situation was similar to that of the previous two experiments. From a certain iteration (again about 30 images) the accuracy and confidence curves showed a more regular behavior, mostly increasing. In this case, the final accuracy was higher, close to 90\% ($0.9$), due to the larger number of images in the final testing set (250 images, compared to 155 for the first experiment and 150 for the second one).

The correlation between accuracy and confidence could also be observed for the iterations in which there was a sudden variation in both variables. When accuracy shows a sudden increase or decrease, a similar situation occurs with confidence, but the increase or decrease was sometimes reversed compared to accuracy. There is a simple explanation for this situation. A large variation in confidence means that many images changed their label at that iteration. This change of label could be right, leading to an increased accuracy, or wrong, leading to a decrease in accuracy. Only occasionally, these variations were compensated as right and wrong classification results (leading to small variations in accuracy). It can also be observed that the higher the number of images, the more likely the variations in accuracy and confidence had the same sign. This explains the strong positive correlation between both curves when the number of images reaches a minimum threshold (about 30-50).

The results shown in Figs.\ref{fig:RTa}-\ref{fig:RTc} provide only one simulation in each case. Such a simulated experiment tries to reproduce the situation that we would find in a real-life experiment, i.e. the images would be obtained one by one, or in a small number at each iteration. However, the results obtained in this way can be biased due to the generation of the pseudo-random numbers used to select the image order. In order to avoid this bias, we carried out 100 simulations with the same set of 105 cover and 45 stego images, but using a different seed for the pseudo-random selection (ordering) of the images at each simulation. Fig.\ref{fig:RTd} shows the averaged results of these 100 simulations for both accuracy and confidence (please note the scale change in the vertical axis with respect to the other three cases). In this figure, we can notice that accuracy became a much ``softer'' curve, starting from low values (about 55\%) and reaching high accuracy (over 80\%) for the complete set. The increase in accuracy is almost monotonic, and the oscillations were almost suppressed by averaging the results of 100 experiments. In addition, we can also observe the same ``softer'' shape of the confidence measurement. The correlation between accuracy and confidence when the number of images was larger than 30 was almost perfect. Both curves increased approximately linearly, with roughly the same slope. 

These experiments suggest that using all the available images increases the accuracy and justifies the method proposed for a real-time implementation of our scheme. No accuracy gain can be expected, a priori, from discarding some images from the testing set.

It must be taken into account that if the ratio of stego images varies too much in real-time, the value of accuracy may not increase as regularly as shown in Fig.\ref{fig:realtime}. As discussed in Section \ref{sec:ExperimentII}, if the percentage of stego images is below 10\% or above 95\%, the detection accuracy results may decrease significantly. This situation may occur if, from some given moment, (nearly) only cover or (nearly) only stego images were obtained, leading to a very low or a very high percentage of stego images. This kind of scenario must be prevented to maintain the accuracy results in acceptable values.

\color{black}

\section{Conclusion}
\label{sec:Conclusion}

In this paper, a novel unsupervised steganalysis method is presented. We show how unsupervised steganalysis can be addressed by using an artificial training set and supervised classification if we know the algorithm and the embedding bit rate used for steganography. Hence, the suggested method is applicable in targeted steganalysis. Using the proposed approach, we can also bypass the CSM problem and outperform the state-of-the-art methods. Removing the necessity of a training data set in the machine learning problem is the major contribution of this paper.

The proposed approach has been tested using three steganographic methods: LSB matching, HUGO and WOW. It is shown that we can achieve better classification accuracy than that obtained using traditional supervised steganalysis (Rich Models, Ensemble Classifiers and SVM), while avoiding the CSM problem that makes the performance of supervised steganalysis decrease significantly.

Furthermore, through the different experiments presented in the paper, we show that the proposed method can address complex real-world situations, in which we do not have a clear training database (e.g., when the images come from different databases), the number of stego images is reduced \textcolor{black}{or the images are obtained one by one, in real time.} We show that the suggested method provides remarkable performance even if the images are selected unevenly from different databases or if the embedding bit rate is unknown and variable for different stego samples.

As future work, it would be worth researching how the suggested method can be applied in situations for which we do not know the image database, the steganographic algorithm or the \textcolor{black}{embedding} bit rate used, as it would occur in real-world steganalysis. The experiments with images taken unevenly from different databases show some decrease in the accuracy results of the method. Besides, if the message length is unknown, the approaches presented in the paper shall be analyzed more deeply, though the preliminary results are promising. Similarly, the case when the testing set is formed only by cover images needs be specifically addressed to avoid a large number of false positives. In addition, it would also be interesting to investigate how to proceed when we cannot estimate the splitting function, e.g. when we deal with a novel and unknown steganographic scheme. This would mean porting the proposed approach to the problem of universal steganalysis.

Finally, we propose to investigate the application of this approach in other fields, beyond steganalysis, with similar properties. The idea behind the proposed method could be exploited whenever a splitting function can be defined in a machine learning classification problem.

\section*{Acknowledgment}                             
This work was partly funded by the Spanish Government through grants TIN2011-27076-C03-02 ``CO-PRIVACY'' and TIN2014-57364-C2-2-R ``SMARTGLACIS''.

\bibliographystyle{elsarticle-harv}

\bibliography{main}

\newpage
\clearpage

\begin{table}[ht]
\begin{center}
\begin{tabu}{l|cc|cc|cc|cc|cc|cc}
\hhline{=============} 
\multicolumn{13}{c}{Comparative analysis between supervised classification and ATS} \\ 
\hhline{-------------} 
\multicolumn{1}{c}{} & 
\multicolumn{4}{|c}{LSB Matching} &
\multicolumn{4}{|c}{HUGO} &
\multicolumn{4}{|c}{WOW} \\
\hhline{~------------} 
\multicolumn{1}{c}{} & 
\multicolumn{2}{|c}{$0.25$ bpp} & 
\multicolumn{2}{|c}{$0.10$ bpp} &
\multicolumn{2}{|c}{$0.40$ bpp} & 
\multicolumn{2}{|c}{$0.20$ bpp} &
\multicolumn{2}{|c}{$0.40$ bpp} & 
\multicolumn{2}{|c}{$0.20$ bpp} \\

\multicolumn{1}{c}{Test DB} & 
\multicolumn{1}{|c}{SUP} & 
\multicolumn{1}{c}{ATS} &
\multicolumn{1}{|c}{SUP} & 
\multicolumn{1}{c}{ATS} &
\multicolumn{1}{|c}{SUP} & 
\multicolumn{1}{c}{ATS} &
\multicolumn{1}{|c}{SUP} & 
\multicolumn{1}{c}{ATS} &
\multicolumn{1}{|c}{SUP} & 
\multicolumn{1}{c}{ATS} &
\multicolumn{1}{|c}{SUP} & 
\multicolumn{1}{c}{ATS} \\

\hline \hline

BOSS & 
0.96 & \textbf{0.98} & 0.90 & \textbf{0.94} & 
0.78 & \textbf{0.87} & 0.67 & \textbf{0.79} & 
0.82 & \textbf{0.94} & 0.62 & \textbf{0.82} \\

ESO & 
0.37 & \textbf{0.94} & 0.50 & \textbf{0.98} & 
0.50 & \textbf{0.83} & 0.45 & \textbf{0.86} & 
0.53 & \textbf{0.98} & 0.50 & \textbf{0.94} \\

CALP & 
0.51 & \textbf{0.96} & 0.57 & \textbf{0.94} & 
0.48 & \textbf{0.90} & 0.48 & \textbf{0.84} & 
0.52 & \textbf{0.95} & 0.51 & \textbf{0.85} \\

INTE & 
0.47 & \textbf{0.98} & 0.50 & \textbf{0.98} & 
0.51 & \textbf{0.95} & 0.51 & \textbf{0.95} & 
0.50 & \textbf{1.00} & 0.50 & \textbf{0.96} \\

NRCS & 
0.49 & \textbf{0.68} & 0.55 & \textbf{0.61} & 
0.49 & \textbf{0.63} & \textbf{0.50} & 0.46 & 
0.50 & \textbf{0.86} & 0.50 & \textbf{0.68} \\

ALBN & 
0.57 & \textbf{0.99} & 0.65 & \textbf{0.98} & 
0.50 & \textbf{0.97} & 0.50 & \textbf{0.91} & 
0.54 & \textbf{0.95} & 0.49 & \textbf{0.92} \\

NOAA & 
0.35 & \textbf{1.00} & 0.50 & \textbf{0.98} & 
0.50 & \textbf{1.00} & 0.44 & \textbf{0.96} & 
0.50 & \textbf{0.98} & 0.50 & \textbf{1.00} \\

\hline \hline 
\end{tabu}
\end{center}
\caption{Accuracy of supervised classification \textcolor{black}{(``SUP'')} compared with the proposed method (``ATS'') \textcolor{black}{for different image databases (``DB'')} embedded with different algorithms and bit rates}
\label{tab:comparative_CSM}
\end{table}

\begin{table}[ht]
\begin{center}
\begin{tabu}{ccccccc}
\hhline{=======} 
\multicolumn{7}{c}{BOSS RANK Experiments} \\ 
\hhline{-------} 
\multicolumn{7}{c}{\textbf{HUGO $\mathbf{0.40}$ bpp}} \\
\hhline{-------} 
\textbf{\scriptsize Method} & 
\textbf{\scriptsize Cover/Stego} & 
\textbf{\scriptsize Acc} & 
\textbf{\scriptsize TP} & 
\textbf{\scriptsize TN} & 
\textbf{\scriptsize FP} & 
\textbf{\scriptsize FN} \\
\hline \hline
\textcolor{black}{SUP} & 471/471 & 0.61 & 461 & 114 & 357 & 10 \\
ATS & 471/471 & 0.86 & 394 & 414 & 57 & 77 \\
ATS 70/30 & 471/202 & 0.84 & 176 & 391 & 80 & 26 \\
ATS 90/10 & 471/52 & 0.81 & 43 & 382 & 89 & 9 \\
\hline \hline
\end{tabu}
\end{center}
\caption{Classification results of different experiments \textcolor{black}{performed with supervised classification (``SUP'') and the proposed method (``ATS'')} using the RANK database embedded with HUGO and $0.40$ bpp: \textcolor{black}{Accuracy (``Acc''), True positives (``TP''), True negatives (``TN''), False positives (``FP'') and False negatives (``FN'')}}
\label{tab:hugo40_bossrank}
\end{table}

\newpage

\begin{table}[ht]
\begin{center}
\resizebox{\textwidth}{!}{
\begin{tabu}{l|ccccc|ccccc|ccccc|ccccc}
\hhline{=====================} 
\multicolumn{21}{c}{LSB Matching Steganography with 50 stego images} \\ 
\hhline{---------------------} 
\multicolumn{1}{c}{} & 
\multicolumn{5}{|c}{\textbf{SUP $\mathbf{0.25}$ bpp}} & 
\multicolumn{5}{|c}{\textbf{ATS $\mathbf{0.25}$ bpp}} &
\multicolumn{5}{|c}{\textbf{SUP $\mathbf{0.10}$ bpp}} & 
\multicolumn{5}{|c}{\textbf{ATS $\mathbf{0.10}$ bpp}} \\
\hhline{-|-----|-----|-----|-----} 
\textbf{DB} & 
\textbf{\scriptsize Acc} & 
\textbf{\scriptsize TP} & 
\textbf{\scriptsize TN} & 
\textbf{\scriptsize FP} & 
\textbf{\scriptsize FN} & 
\textbf{\scriptsize Acc } &
\textbf{\scriptsize TP} & 
\textbf{\scriptsize TN} & 
\textbf{\scriptsize FP} & 
\textbf{\scriptsize FN} &
\textbf{\scriptsize Acc} & 
\textbf{\scriptsize TP} & 
\textbf{\scriptsize TN} & 
\textbf{\scriptsize FP} & 
\textbf{\scriptsize FN} & 
\textbf{\scriptsize Acc } &
\textbf{\scriptsize TP} & 
\textbf{\scriptsize TN} & 
\textbf{\scriptsize FP} & 
\textbf{\scriptsize FN} \\
\hline \hline

BOSS & 
\textbf{0.95} & 47 & 119 & 6 & 3 & 
0.85 & 48 & 100 & 25 & 2 & 
\textbf{0.89} & 47 & 108 & 17 & 3 & 
0.78 & 48 & 89 & 36 & 2 \\

ESO & 
0.47 & 12 & 71 & 54 & 38 & 
\textbf{0.82} & 47 & 96 & 29 & 3 & 
0.30 & 45 & 8 & 117 & 5 & 
\textbf{0.89} & 48 & 108 & 17 & 2 \\

CALP & 
0.72 & 4 & 117 & 0 & 46 & 
\textbf{0.94} & 43 & 121 & 4 & 7 & 
0.72 & 16 & 105 & 12 & 34 & 
\textbf{0.85} & 37 & 112 & 13 & 13 \\

INTE & 
0.66 & 2 & 114 & 11 & 48 & 
\textbf{0.97} & 49 & 120 & 5 & 1 & 
0.29 & 50 & 0 & 125 & 0 & 
\textbf{0.92} & 49 & 112 & 13 & 1 \\

NRCS & 
\textbf{0.66} & 5 & 103 & 11 & 45 & 
0.59 & 41 & 63 & 62 & 9 & 
0.39 & 40 & 24 & 90 & 10 & 
\textbf{0.52} & 37 & 54 & 71 & 13 \\

ALBI & 
0.61 & 22 & 84 & 41 & 28 & 
\textbf{0.98} & 50 & 122 & 3 & 0 & 
0.55 & 42 & 55 & 70 & 8 & 
\textbf{0.97} & 49 & 121 & 4 & 1 \\

NOAA & 
0.46 & 17 & 64 & 61 & 33 & 
\textbf{0.89} & 50 & 105 & 20 & 0 & 
0.29 & 50 & 0 & 125 & 0 & 
\textbf{0.95} & 49 & 117 & 8 & 1 \\

\hline \hline
\end{tabu}
}
\end{center}
\caption{Classification results of the supervised approach \textcolor{black}{(``SUP'')} and the proposed method \textcolor{black}{(``ATS'')} \textcolor{black}{for different databases (``DB'')} embedded with LSB matching steganography using only 50 stego images: \textcolor{black}{Accuracy (``Acc''), True positives (``TP''), True negatives (``TN''), False positives (``FP'') and False negatives (``FN'')}}
\label{tab:results_LSBM_50}
\end{table}

\begin{table}[ht]
\begin{center}
\resizebox{\textwidth}{!}{
\begin{tabu}{l|ccccc|ccccc|ccccc|ccccc}
\hhline{=====================} 
\multicolumn{21}{c}{HUGO Steganography with 50 stego images} \\ 
\hhline{---------------------} 
\multicolumn{1}{c}{} & 
\multicolumn{5}{|c}{\textbf{SUP $\mathbf{0.40}$ bpp}} & 
\multicolumn{5}{|c}{\textbf{ATS $\mathbf{0.40}$ bpp}} &
\multicolumn{5}{|c}{\textbf{SUP $\mathbf{0.20}$ bpp}} & 
\multicolumn{5}{|c}{\textbf{ATS $\mathbf{0.20}$ bpp}} \\
\hhline{-|-----|-----|-----|-----} 
\textbf{DB} & 
\textbf{\scriptsize Acc} & 
\textbf{\scriptsize TP} & 
\textbf{\scriptsize TN} & 
\textbf{\scriptsize FP} & 
\textbf{\scriptsize FN} & 
\textbf{\scriptsize Acc } &
\textbf{\scriptsize TP} & 
\textbf{\scriptsize TN} & 
\textbf{\scriptsize FP} & 
\textbf{\scriptsize FN} &
\textbf{\scriptsize Acc} & 
\textbf{\scriptsize TP} & 
\textbf{\scriptsize TN} & 
\textbf{\scriptsize FN} & 
\textbf{\scriptsize FP} & 
\textbf{\scriptsize Acc } &
\textbf{\scriptsize TP} & 
\textbf{\scriptsize TN} & 
\textbf{\scriptsize FP} & 
\textbf{\scriptsize FN} \\
\hline \hline

BOSS & 
0.77 & 43 & 91 & 34 & 7 & 
\textbf{0.79} & 46 & 92 & 33 & 4  & 
\textbf{0.63} & 36 & 75 & 50 & 14 & 
0.62 & 40 & 69 & 56 & 10 \\

ESO & 
0.43 & 34 & 41 & 84 & 16 & 
\textbf{0.82} & 49 & 95 & 30 & 1 & 
0.33 & 34 & 23 & 102 & 16 & 
\textbf{0.76} & 47 & 86 & 39 & 3 \\

CALP & 
0.44 & 21 & 51 & 66 & 24 & 
\textbf{0.77} & 36 & 89 & 28 & 9 & 
0.49 & 13 & 66 & 51 & 32 & 
\textbf{0.69} & 37 & 75 & 42 & 8 \\

INTE & 
0.61 & 22 & 85 & 40 & 28 & 
\textbf{0.94} & 50 & 114 & 11 & 0 & 
0.43 & 27 & 48 & 77 & 23 & 
\textbf{0.89} & 49 & 106 & 19 & 1 \\

NRCS & 
0.35 & 29 & 25 & 89 & 11 & 
\textbf{0.68} & 36 & 68 & 46 & 4  & 
0.49 & 25 & 51 & 63 & 15 & 
\textbf{0.72} & 7 & 104 & 10 & 33 \\

ALBI & 
0.30 & 48 & 5 & 120 & 2 & 
\textbf{0.95} & 49 & 117 & 8 & 1 & 
0.32 & 46 & 10 & 115 & 4 & 
\textbf{0.74} & 42 & 87 & 38 & 8 \\

NOAA & 
0.31 & 49 & 6 & 119 & 1 & 
\textbf{0.98} & 49 & 123 & 2 & 1  & 
0.30 & 49 & 3 & 122 & 1 & 
\textbf{0.90} & 49 & 109 & 16 & 1 \\

\hline \hline
\end{tabu}
}
\end{center}
\caption{Classification results of the supervised approach \textcolor{black}{(``SUP'')} and the proposed method \textcolor{black}{(``ATS'')} \textcolor{black}{for different databases (``DB'')} embedded with HUGO steganography using only 50 stego images: \textcolor{black}{Accuracy (``Acc''), True positives (``TP''), True negatives (``TN''), False positives (``FP'') and False negatives (``FN'')}}
\label{tab:results_HUGO_50}
\end{table}

\newpage

\begin{table}[ht]
\begin{center}
\resizebox{\textwidth}{!}{
\begin{tabu}{l|ccccc|ccccc|ccccc|ccccc}
\hhline{=====================} 
\multicolumn{21}{c}{LSB Matching Steganography with 10 stego images} \\ 
\hhline{---------------------} 
\multicolumn{1}{c}{} & 
\multicolumn{5}{|c}{\textbf{SUP $\mathbf{0.25}$ bpp}} & 
\multicolumn{5}{|c}{\textbf{ATS $\mathbf{0.25}$ bpp}} &
\multicolumn{5}{|c}{\textbf{SUP $\mathbf{0.10}$ bpp}} & 
\multicolumn{5}{|c}{\textbf{ATS $\mathbf{0.10}$ bpp}} \\
\hhline{-|-----|-----|-----|-----} 
\textbf{DB} & 
\textbf{\scriptsize Acc} & 
\textbf{\scriptsize TP} & 
\textbf{\scriptsize TN} & 
\textbf{\scriptsize FP} & 
\textbf{\scriptsize FN} & 
\textbf{\scriptsize Acc } &
\textbf{\scriptsize TP} & 
\textbf{\scriptsize TN} & 
\textbf{\scriptsize FP} & 
\textbf{\scriptsize FN} &
\textbf{\scriptsize Acc} & 
\textbf{\scriptsize TP} & 
\textbf{\scriptsize TN} & 
\textbf{\scriptsize FP} & 
\textbf{\scriptsize FN} & 
\textbf{\scriptsize Acc } &
\textbf{\scriptsize TP} & 
\textbf{\scriptsize TN} & 
\textbf{\scriptsize FP} & 
\textbf{\scriptsize FN} \\
\hline \hline

BOSS & 
\textbf{0.96} & 10 & 119 & 6 & 0 & 
0.86 & 10 & 106 & 19 & 0 & 
\textbf{0.87} & 9 & 108 & 17 & 1 & 
0.70 & 9 & 86 & 39 & 1 \\

ESO & 
0.54 & 2 & 71 & 54 & 8 & 
\textbf{0.82} & 10 & 101 & 24 & 0 & 
0.13 & 9 & 8 & 117 & 1 & 
\textbf{0.58} & 10 & 68 & 57 & 0  \\

CALP & 
\textbf{0.92} & 0 & 117 & 0 & 10 & 
0.29 & 6 & 33 & 92 & 4 & 
\textbf{0.87} & 5 & 105 & 12 & 5 & 
0.26 & 4 & 31 & 94 & 6 \\

NTE & 
\textbf{0.84} & 0 & 114 & 11 & 10 & 
0.64 & 8 & 78 & 47 & 2 & 
0.07 & 10 & 0 & 125 & 0 & 
\textbf{0.62} & 10 & 74 & 51 & 0 \\

NRCS & 
\textbf{0.83} & 0 & 103 & 11 & 10 & 
0.47 & 9 & 54 & 71 & 1 & 
0.26 & 8 & 24 & 90 & 2 & 
\textbf{0.41} & 8 & 48 & 77 & 2 \\

ALBI & 
0.69 & 9 & 84 & 41 & 1 & 
\textbf{0.91} & 10 & 113 & 12 & 0 & 
0.47 & 8 & 55 & 70 & 2 & 
\textbf{0.81} & 10 & 100 & 25 & 0 \\

NOAA & 
0.48 & 1 & 64 & 61 & 9 & 
\textbf{0.72} & 9 & 88 & 37 & 1 & 
0.07 & 10 & 0 & 125 & 0 & 
\textbf{0.73} & 10 & 88 & 37 & 0 \\

\hline \hline
\end{tabu}
}
\end{center}
\caption{Classification results of the suppervised approach \textcolor{black}{(``SUP'')} and the proposed method \textcolor{black}{(``ATS'')} \textcolor{black}{for different databases (``DB'')} embedded with LSB matching steganography using only 10 stego images: \textcolor{black}{Accuracy (``Acc''), True positives (``TP''), True negatives (``TN''), False positives (``FP'') and False negatives (``FN'')}}
\label{tab:results_LSBM_10}
\end{table}

\begin{table}[ht]
\begin{center}
\resizebox{\textwidth}{!}{
\begin{tabu}{l|ccccc|ccccc|ccccc|ccccc}
\hhline{=====================} 
\multicolumn{21}{c}{HUGO Steganography with 10 stego images} \\ 
\hhline{---------------------} 
\multicolumn{1}{c}{} & 
\multicolumn{5}{|c}{\textbf{SUP $\mathbf{0.40}$ bpp}} & 
\multicolumn{5}{|c}{\textbf{ATS $\mathbf{0.40}$ bpp}} &
\multicolumn{5}{|c}{\textbf{SUP $\mathbf{0.20}$ bpp}} & 
\multicolumn{5}{|c}{\textbf{ATS $\mathbf{0.20}$ bpp}} \\
\hhline{-|-----|-----|-----|-----} 
\textbf{DB} & 
\textbf{\scriptsize Acc} & 
\textbf{\scriptsize TP} & 
\textbf{\scriptsize TN} & 
\textbf{\scriptsize FP} & 
\textbf{\scriptsize FN} & 
\textbf{\scriptsize Acc } &
\textbf{\scriptsize TP} & 
\textbf{\scriptsize TN} & 
\textbf{\scriptsize FP} & 
\textbf{\scriptsize FN} &
\textbf{\scriptsize Acc} & 
\textbf{\scriptsize TP} & 
\textbf{\scriptsize TN} & 
\textbf{\scriptsize FP} & 
\textbf{\scriptsize FN} & 
\textbf{\scriptsize Acc } &
\textbf{\scriptsize TP} & 
\textbf{\scriptsize TN} & 
\textbf{\scriptsize FP} & 
\textbf{\scriptsize FN} \\
\hline \hline

BOSS & 
\textbf{0.75} & 10 & 91 & 34 & 0 & 
0.58 & 10 & 68 & 57 & 0 & 
\textbf{0.60} & 6 & 75 & 50 & 4 & 
0.51 & 10 & 59 & 66 & 0 \\

ESO & 
\textbf{0.36} & 7 & 41 & 84 & 3 & 
0.21 & 10 & 19 & 106 & 0 & 
0.22 & 7 & 23 & 102 & 3 & 
\textbf{0.43} & 9 & 49 & 76 & 1 \\

CALP & 
\textbf{0.40} & 0 & 51 & 66 & 9 & 
0.19 & 4 & 20 & 97 & 5 & 
\textbf{0.52} & 0 & 66 & 51 & 9 & 
0.16 & 3 & 17 & 100 & 6 \\

INTE & 
\textbf{0.64} & 2 & 85 & 40 & 8 & 
0.24 & 10 & 23 & 102 & 0 & 
\textbf{0.38} & 3 & 48 & 77 & 7 & 
0.27 & 10 & 27 & 98 & 0 \\

NRCS & 
0.26 & 7 & 25 & 89 & 3 & 
\textbf{0.51} & 8 & 55 & 59 & 2 & 
0.45 & 5 & 51 & 63 & 5 & 
\textbf{0.49} & 8 & 53 & 61 & 2 \\

ALBI & 
0.11 & 10 & 5 & 120 & 0 & 
\textbf{0.65} & 10 & 78 & 47 & 0 & 
0.15 & 10 & 10 & 115 & 0 & 
\textbf{0.57} & 9 & 68 & 57 & 1 \\

NOAA & 
0.11 & 9 & 6 & 119 & 1 & 
\textbf{0.60} & 10 & 71 & 54 & 0 & 
0.09 & 9 & 3 & 122 & 1 & 
\textbf{0.76} & 10 & 93 & 32 & 0 \\

\hline \hline
\end{tabu}
}
\end{center}
\caption{Classification results of the supervised approach \textcolor{black}{(``SUP'')} and the proposed method \textcolor{black}{(``ATS'')} \textcolor{black}{for different databases (``DB'')} embedded with HUGO steganography using only 10 stego images: \textcolor{black}{Accuracy (``Acc''), True positives (``TP''), True negatives (``TN''), False positives (``FP'') and False negatives (``FN'')}}
\label{tab:results_HUGO_10}
\end{table}

\newpage

\begin{table}[ht]
\begin{center}
\begin{tabu}{ccccccc}
\hhline{=======} 
\multicolumn{7}{c}{Variable ratio of stego images} \\ 
\hhline{-------} 
\multicolumn{7}{c}{\textbf{HUGO $\mathbf{0.40}$ bpp}} \\
\hhline{-------} 
\textbf{\scriptsize Cover/Stego} & 
\textbf{\scriptsize Percent of stego} & 
\textbf{\scriptsize Acc} & 
\textbf{\scriptsize TP} & 
\textbf{\scriptsize TN} & 
\textbf{\scriptsize FP} & 
\textbf{\scriptsize FN} \\
\hline \hline

125 / 0  & 0\%  & 0.26 &  0 & 33 & 92 & 0 \\
120 / 5  & 4\%  & 0.54 &  5 & 63 & 57 & 0 \\      
115 / 10 & 8\%  & 0.53 & 10 & 56 & 59 & 0 \\      
110 / 15 & 12\% & 0.73 & 15 & 76 & 34 & 0 \\      
105 / 20 & 16\% & 0.73 & 19 & 72 & 33 & 1 \\      
100 / 25 & 20\% & 0.76 & 23 & 72 & 28 & 2 \\        
95 / 30  & 24\% & 0.81 & 28 & 73 & 22 & 2 \\        
90 / 35  & 28\% & 0.82 & 31 & 72 & 18 & 4 \\      
85 / 40  & 32\% & 0.80 & 37 & 63 & 22 & 3  \\      
80 / 45  & 36\% & 0.82 & 42 & 61 & 19 & 3 \\      
75 / 50  & 40\% & 0.81 & 44 & 57 & 18 & 6 \\      
70 / 55  & 44\% & 0.87 & 52 & 57 & 13 & 3 \\      
65 / 60  & 48\% & 0.81 & 51 & 50 & 15 & 9 \\      
60 / 65  & 52\% & 0.85 & 58 & 48 & 12 & 7 \\      
55 / 70  & 56\% & 0.84 & 62 & 43 & 12 & 8 \\      
50 / 75  & 60\% & 0.85 & 61 & 45 & 5 & 14 \\      
45 / 80  & 64\% & 0.85 & 67 & 39 & 6 & 13 \\      
40 / 85  & 68\% & 0.84 & 71 & 34 & 6 & 14 \\      
35 / 90  & 72\% & 0.86 & 74 & 33 & 2 & 16 \\      
30 / 95  & 76\% & 0.85 & 80 & 26 & 4 & 15 \\      
25 / 100 & 80\% & 0.82 & 82 & 20 & 5 & 18 \\       
20 / 105 & 84\% & 0.80 & 80 & 20 & 0 & 25 \\   
15 / 110 & 88\% & 0.79 & 85 & 14 & 1 & 25 \\
10 / 115 & 92\% & 0.75 & 86 & 8 & 2 & 29 \\
5 / 120  & 96\% & 0.75 & 89 & 5 & 0 & 31 \\
0 / 125  & 100\% & 0.65 & 81 & 0 & 0 & 44  \\
\hline \hline
\end{tabu}
\end{center}
\caption{Classification results of the proposed method with the BOSS database \textcolor{black}{(``DB'')} embedded with HUGO and $0.40$ bpp: \textcolor{black}{Accuracy (``Acc''), True positives (``TP''), True negatives (``TN''), False positives (``FP'') and False negatives (``FN'')}}
\label{tab:hugo40_step5}
\end{table}

\begin{table}[ht]
\begin{center}
\begin{tabu}{cccccccc}
\hhline{========} 
\multicolumn{8}{c}{HUGO/LSBM with few testing images} \\

\textbf{\scriptsize Cover/Stego} & 
\textbf{\scriptsize DB/Algorithm} &
\textbf{\scriptsize Acc} & 
\textbf{\scriptsize TP} & 
\textbf{\scriptsize TN} & 
\textbf{\scriptsize FP} & 
\textbf{\scriptsize FN} \\
\hline \hline

10/10 & \scriptsize BOSS / LSBM 0.25 & 0.89 & 9.1 & 8.8 & 1.2 & 0.9 \\
10/10 & \scriptsize BOSS / HUGO 0.40 & 0.73 & 7.8 & 6.8 & 3.2 & 2.2 \\
10/10 & \scriptsize ESO  / LSBM 0.25 & 0.82 & 8.6 & 7.9 & 2.1 & 1.4 \\ 
10/10 & \scriptsize ESO  / HUGO 0.40 & 0.68 & 7.8 & 5.8 & 4.2 & 2.2 \\
10/10 & \scriptsize CALP / LSBM 0.25 & 0.84 & 8.5 & 8.3 & 1.7 & 1.5 \\          
                                          
10/10 & \scriptsize CALP / HUGO 0.40 & 0.80 & 7.7 & 7.4 & 2.2 & 1.5 \\          
                                            
10/10 & \scriptsize INTE / LSBM 0.25 & 0.94 & 10 & 8.8 & 1.2 & 0 \\           
                                            
10/10 & \scriptsize INTE / HUGO 0.40 & 0.77 & 9.3 & 6.1 & 3.9 & 0.7 \\          
                                             
10/10 & \scriptsize NRCS / LSBM 0.25 & 0.53 & 6.3 & 4.3 & 5.7 & 3.7 \\          
                                             
10/10 & \scriptsize NRCS / HUGO 0.40 & 0.53 & 4.7 & 5.2 & 4.3 & 4.2 \\          
                                           
10/10 & \scriptsize ALBN / LSBM 0.25 & 0.93 & 9.4 & 9.2 & 0.8 & 0.6 \\ 
10/10 & \scriptsize ALBN / HUGO 0.40 & 0.85 & 9.2 & 7.9 & 2.1 & 0.8 \\  
10/10 & \scriptsize NOAA / LSBM 0.25 & 0.93 & 9.6 & 9.1 & 0.9 & 0.4 \\ 
10/10 & \scriptsize NOAA / HUGO 0.40 & 0.83 & 9.2 & 7.5 & 2.5 & 0.8 \\

\hhline{--------}                                                   

5/5 & \scriptsize BOSS / LSBM 0.25 & 0.83 & 4.0 & 4.3 & 0.7 & 1.0 \\
5/5 & \scriptsize BOSS / HUGO 0.40 & 0.61 & 3.2 & 2.9 & 2.1 & 1.8 \\
5/5 & \scriptsize ESO  / LSBM 0.25 & 0.71 & 4.0 & 3.1 & 1.9 & 1.0 \\
5/5 & \scriptsize ESO  / HUGO 0.40 & 0.68 & 3.9 & 2.9 & 2.1 & 1.1 \\
5/5 & \scriptsize CALP / LSBM 0.25 & 0.79 & 3.7 & 4.2 & 0.8 & 1.3 \\
5/5 & \scriptsize CALP / HUGO 0.40 & 0.63 & 3.6 & 2.3 & 2.4 & 1.0 \\
5/5 & \scriptsize INTE / LSBM 0.25 & 0.86 & 4.9 & 3.7 & 1.3 & 0.1 \\
5/5 & \scriptsize INTE / HUGO 0.40 & 0.73 & 4.4 & 2.9 & 2.1 & 0.6 \\
5/5 & \scriptsize NRCS / LSBM 0.25 & 0.51 & 3.5 & 1.6 & 3.4 & 1.5 \\
5/5 & \scriptsize NRCS / HUGO 0.40 & 0.59 & 2.7 & 2.6 & 2.0 & 1.7 \\
5/5 & \scriptsize ALBN / LSBM 0.25 & 0.85 & 4.3 & 4.2 & 0.8 & 0.7 \\
5/5 & \scriptsize ALBN / HUGO 0.40 & 0.77 & 4.1 & 3.6 & 1.4 & 0.9 \\
5/5 & \scriptsize NOAA / LSBM 0.25 & 0.82 & 4.5 & 3.7 & 1.3 & 0.5 \\
5/5 & \scriptsize NOAA / HUGO 0.40 & 0.78 & 4.6 & 3.2 & 1.8 & 0.4 \\

\hline \hline
\end{tabu}

\end{center}
\caption{Average classification results of the proposed method for different databases \textcolor{black}{(``DB'')} with HUGO and LSB matching steganography for very small testing sets: \textcolor{black}{Accuracy (``Acc''), True positives (``TP''), True negatives (``TN''), False positives (``FP'') and False negatives (``FN'')}}
\label{tab:results_few1}

\end{table}

\begin{table}[ht]
\begin{center}

\begin{tabu}{cccccccc}
\hhline{========} 
\multicolumn{8}{c}{HUGO/LSBM with few testing images} \\

\textbf{\scriptsize Cover/Stego} & 
\textbf{\scriptsize DB/Algorithm} &
\textbf{\scriptsize Acc} & 
\textbf{\scriptsize TP} & 
\textbf{\scriptsize TN} & 
\textbf{\scriptsize FP} & 
\textbf{\scriptsize FN} \\
\hline \hline

5/10 & \scriptsize BOSS / LSBM 0.25 & 0.77 & 7.8 & 3.8 & 1.2 & 2.2 \\  

5/10 & \scriptsize BOSS / HUGO 0.40 & 0.69 & 7.4 & 3.0 & 2.0 & 2.6 \\   

5/10 & \scriptsize ESO  / LSBM 0.25 & 0.80 & 8.7 & 3.4 & 1.6 & 1.3 \\   

5/10 & \scriptsize ESO  / HUGO 0.40 & 0.77 & 7.8 & 3.8 & 1.2 & 2.2 \\  

5/10 & \scriptsize CALP / LSBM 0.25 & 0.77 & 7.8 & 3.9 & 1.1 & 2.2 \\  

5/10 & \scriptsize CALP / HUGO 0.40 & 0.83 & 8.1 & 3.8 & 0.7 & 1.7 \\  

5/10 & \scriptsize INTE / LSBM 0.25 & 0.93 & 9.6 & 4.4 & 0.6 & 0.4 \\  

5/10 & \scriptsize INTE / HUGO 0.40 & 0.75 & 7.6 & 3.7 & 1.3 & 2.4 \\  

5/10 & \scriptsize NRCS / LSBM 0.25 & 0.61 & 7.0 & 2.3 & 2.7 & 3.0 \\   

5/10 & \scriptsize NRCS / HUGO 0.40 & 0.49 & 4.1 & 2.4 & 2.3 & 4.5 \\  

5/10 & \scriptsize ALBN / LSBM 0.25 & 0.92 & 9.5 & 4.4 & 0.6 & 0.5 \\  

5/10 & \scriptsize ALBN / HUGO 0.40 & 0.86 & 8.6 & 4.3 & 0.7 & 1.4 \\   

5/10 & \scriptsize NOAA / LSBM 0.25 & 0.94 & 9.6 & 4.5 & 0.5 & 0.4 \\   

5/10 & \scriptsize NOAA / HUGO 0.40 & 0.89 & 8.9 & 4.5 & 0.5 & 1.1 \\ 

\hhline{--------}

10/5 & \scriptsize BOSS / LSBM 0.25 & 0.81 & 4.5 & 7.7 & 2.3 & 0.5 \\   

10/5 & \scriptsize BOSS / HUGO 0.40 & 0.59 & 3.4 & 5.5 & 4.5 & 1.6 \\  

10/5 & \scriptsize ESO  / LSBM 0.25 & 0.73 & 4.7 & 6.3 & 3.7 & 0.3 \\   
10/5 & \scriptsize ESO  / HUGO 0.40 & 0.66 & 4.2 & 5.8 & 4.2 & 0.8 \\   
10/5 & \scriptsize CALP / LSBM 0.25 & 0.82 & 3.9 & 8.4 & 1.6 & 1.1 \\   
10/5 & \scriptsize CALP / HUGO 0.40 & 0.57 & 3.5 & 4.7 & 4.6 & 1.5 \\   
10/5 & \scriptsize INTE / LSBM 0.25 & 0.87 & 4.9 & 8.2 & 1.8 & 0.1 \\   
10/5 & \scriptsize INTE / HUGO 0.40 & 0.69 & 4.9 & 5.6 & 4.4 & 0.1 \\   
10/5 & \scriptsize NRCS / LSBM 0.25 & 0.48 & 3.6 & 3.6 & 6.4 & 1.4 \\  
10/5 & \scriptsize NRCS / HUGO 0.40 & 0.52 & 2.8 & 4.3 & 5.0 & 1.6 \\  
10/5 & \scriptsize ALBN / LSBM 0.25 & 0.88 & 4.7 & 8.6 & 1.4 & 0.3 \\   
10/5 & \scriptsize ALBN / HUGO 0.40 & 0.76 & 4.7 & 6.7 & 3.3 & 0.3 \\  
10/5 & \scriptsize NOAA / LSBM 0.25 & 0.88 & 4.8 & 8.5 & 1.5 & 0.2 \\   
10/5 & \scriptsize NOAA / HUGO 0.40 & 0.71 & 4.7 & 6.1 & 3.9 & 0.3 \\

\hline \hline
\end{tabu}

\end{center}
\caption{Average classification results of the proposed method for different databases \textcolor{black}{(``DB'')} with HUGO and LSB matching steganography for very small testing sets: \textcolor{black}{Accuracy (``Acc''), True positives (``TP''), True negatives (``TN''), False positives (``FP'') and False negatives (``FN'')}}
\label{tab:results_few2}

\end{table}

\begin{table}[ht]
\begin{center}

\begin{tabu}{cccccccc}
\hhline{========} 
\multicolumn{8}{c}{HUGO/LSBM with few testing images} \\

\textbf{\scriptsize Cover/Stego} & 
\textbf{\scriptsize DB/Algorithm} &
\textbf{\scriptsize Acc} & 
\textbf{\scriptsize TP} & 
\textbf{\scriptsize TN} & 
\textbf{\scriptsize FP} & 
\textbf{\scriptsize FN} \\
\hline \hline

5/3 & \scriptsize BOSS / LSBM 0.25 & 0.75 & 2.5 & 3.5 & 1.5 & 0.5 \\
5/3 & \scriptsize BOSS / HUGO 0.40 & 0.55 & 2.1 & 2.3 & 2.7 & 0.9 \\
5/3 & \scriptsize ESO  / LSBM 0.25 & 0.64 & 2.7 & 2.4 & 2.6 & 0.3 \\
5/3 & \scriptsize ESO  / HUGO 0.40 & 0.59 & 2.4 & 2.3 & 2.7 & 0.6 \\
5/3 & \scriptsize CALP / LSBM 0.25 & 0.75 & 1.9 & 4.1 & 0.9 & 1.1 \\
5/3 & \scriptsize CALP / HUGO 0.40 & 0.61 & 2.4 & 2.3 & 2.4 & 0.6 \\
5/3 & \scriptsize INTE / LSBM 0.25 & 0.77 & 3.0 & 3.2 & 1.8 & 0 \\
5/3 & \scriptsize INTE / HUGO 0.40 & 0.71 & 2.9 & 2.8 & 2.2 & 0.1 \\
5/3 & \scriptsize NRCS / LSBM 0.25 & 0.45 & 2.1 & 1.5 & 3.5 & 0.9 \\
5/3 & \scriptsize NRCS / HUGO 0.40 & 0.49 & 1.3 & 2.3 & 2.3 & 1.2 \\
5/3 & \scriptsize ALBN / LSBM 0.25 & 0.65 & 2.3 & 2.9 & 2.1 & 0.7 \\
5/3 & \scriptsize ALBN / HUGO 0.40 & 0.64 & 2.5 & 2.6 & 2.4 & 0.5 \\
5/3 & \scriptsize NOAA / LSBM 0.25 & 0.79 & 2.9 & 3.4 & 1.6 & 0.1 \\
5/3 & \scriptsize NOAA / HUGO 0.40 & 0.61 & 3.0 & 1.9 & 3.1 & 0 \\ 
\hhline{--------} 
3/5 & \scriptsize BOSS / LSBM 0.25 & 0.81 & 4.3 & 2.2 & 0.8 & 0.7 \\
3/5 & \scriptsize BOSS / HUGO 0.40 & 0.58 & 3.1 & 1.6 & 1.4 & 1.9 \\
3/5 & \scriptsize ESO  / LSBM 0.25 & 0.67 & 3.8 & 1.6 & 1.4 & 1.2 \\
3/5 & \scriptsize ESO  / HUGO 0.40 & 0.74 & 4.5 & 1.4 & 1.6 & 0.5 \\
3/5 & \scriptsize CALP / LSBM 0.25 & 0.69 & 3.5 & 2.0 & 1.0 & 1.5 \\
3/5 & \scriptsize CALP / HUGO 0.40 & 0.61 & 2.9 & 1.4 & 1.1 & 1.6 \\
3/5 & \scriptsize INTE / LSBM 0.25 & 0.88 & 5.0 & 2.1 & 0.9 & 0 \\
3/5 & \scriptsize INTE / HUGO 0.40 & 0.84 & 4.6 & 2.1 & 0.9 & 0.4 \\
3/5 & \scriptsize NRCS / LSBM 0.25 & 0.54 & 3.5 & 0.8 & 2.2 & 1.5 \\
3/5 & \scriptsize NRCS / HUGO 0.40 & 0.54 & 2.7 & 1.4 & 1.6 & 1.9 \\
3/5 & \scriptsize ALBN / LSBM 0.25 & 0.64 & 3.5 & 1.6 & 1.4 & 1.5 \\
3/5 & \scriptsize ALBN / HUGO 0.40 & 0.79 & 3.9 & 2.4 & 0.6 & 1.1 \\
3/5 & \scriptsize NOAA / LSBM 0.25 & 0.86 & 4.8 & 2.1 & 0.9 & 0.2 \\    
3/5 & \scriptsize NOAA / HUGO 0.40 & 0.65 & 3.4 & 1.8 & 1.2 & 1.6 \\

\hline \hline
\end{tabu}

\end{center}
\caption{Average classification results of the proposed method for different databases \textcolor{black}{(``DB'')} with HUGO and LSB matching steganography for very small testing sets: \textcolor{black}{Accuracy (``Acc''), True positives (``TP''), True negatives (``TN''), False positives (``FP'') and False negatives (``FN'')}}
\label{tab:results_few3}
\end{table}

\begin{table}[ht]
\begin{center}
\begin{tabu}{c|ccccc|ccccc}
\hhline{=|=====|=====} 
\multicolumn{11}{c}{LSB Matching Embedding with mixed DBs ($ \mathbf{0.10} $ bpp)} \\ 
\hhline{-|-----|-----} 
\multicolumn{1}{c}{} &
\multicolumn{5}{|c}{\textbf{SUP}} &
\multicolumn{5}{|c}{\textbf{ATS}} \\
\hhline{-|-----|-----} 
\textbf{\scriptsize Stego/Total samples} & 
\textbf{\scriptsize Acc} & 
\textbf{\scriptsize TP} & 
\textbf{\scriptsize TN} & 
\textbf{\scriptsize FP} & 
\textbf{\scriptsize FN} &
\textbf{\scriptsize Acc} & 
\textbf{\scriptsize TP} & 
\textbf{\scriptsize TN} & 
\textbf{\scriptsize FP} & 
\textbf{\scriptsize FN} \\
\hline \hline

140/280 & 0.47 & 78 & 54 & 84 & 62 & 
          \textbf{0.79} & 116 & 106 & 34 & 24 \\

50/190  & 0.42 & 25 & 54 & 84 & 25 & 
          \textbf{0.74} & 50 & 91 & 49 & 0\\

140/280 (same size) & 
          0.69 & 122 & 72 & 67 & 18 & 
          \textbf{0.87} & 122 & 122 & 18 & 18 \\

50/190 (same size)  & 
          0.61 & 44 & 72 & 67 & 6 & 
          \textbf{0.78} & 46 & 103 & 37 & 4\\

\hline \hline
\end{tabu}
\end{center}
\caption{Classification results of the supervised approach \textcolor{black}{(``SUP'')} and the proposed method \textcolor{black}{(``ATS'')} in an unevenly mixed database \textcolor{black}{(``DB'')} embedded with LSB matching and $0.10$ bpp: \textcolor{black}{Accuracy (``Acc''), True positives (``TP''), True negatives (``TN''), False positives (``FP'') and False negatives (``FN'')}}
\label{tab:results_LSBM_10_MIX}
\end{table}

\begin{table}[ht]
\begin{center}
\begin{tabu}{l|ccccc|ccccc}
\hhline{=|=====|=====} 
\multicolumn{11}{c}{LSB Matching Steganography} \\
\multicolumn{11}{c}{with Multiple Embedding Bit Rates} \\
\hhline{-|-----|-----} 
\multicolumn{11}{c}{\textbf{$\mathbf{0.25}$, $\mathbf{0.20}$, $\mathbf{0.15}$, $\mathbf{0.10}$ and $\mathbf{0.05}$ bpp}} \\
\hhline{-|-----|-----} 
\multicolumn{1}{c}{} &
\multicolumn{5}{|c}{\textbf{SUP}} &
\multicolumn{5}{|c}{\textbf{ATS}} \\
\hhline{-|-----|-----} 
\textbf{\scriptsize DB} & 
\textbf{\scriptsize Acc} & 
\textbf{\scriptsize TP} & 
\textbf{\scriptsize TN} & 
\textbf{\scriptsize FP} & 
\textbf{\scriptsize FN} &
\textbf{\scriptsize Acc} & 
\textbf{\scriptsize TP} & 
\textbf{\scriptsize TN} & 
\textbf{\scriptsize FP} & 
\textbf{\scriptsize FN} \\
\hline \hline

BOSS & 
0.53 & 124 & 9 & 116 & 1 & 
\textbf{0.87} & 99 & 119 & 6 & 26 \\

CALP & 
0.53 & 81 & 47 & 70 & 44 &
\textbf{0.89} & 103 & 120 & 5 & 22 \\

INTE & 
0.47 & 117 & 0 & 125 & 8 &
\textbf{0.88} & 100 & 119 & 6 & 25 \\

\hline \hline
\end{tabu}
\end{center}
\caption{Classification results of the supervised approach \textcolor{black}{(``SUP'')} and proposed method \textcolor{black}{(``ATS'')} using multiple embedding bit rates \textcolor{black}{with three different databases (``DB''): Accuracy (``Acc''), True positives (``TP''), True negatives (``TN''), False positives (``FP'') and False negatives (``FN'')}}
\label{tab:results_LSBM_MBR}
\end{table}

\begin{table}[ht]
\begin{center}
\begin{tabu}{ccccc}
\hhline{=====} 
\multicolumn{5}{c}{Unknown Message Length} \\ 
\hhline{-----} 
\textbf{\scriptsize Algorithm} & 
\textbf{\scriptsize Real bit rate (bpp)} & 
\textbf{\scriptsize Tentative bit rate (bpp)} & 
\textbf{\scriptsize Score} & 
\textbf{\scriptsize Accuracy} \\
\hline \hline
\multirow{6}{*}{LSBM} & \multirow{6}{*}{$0.40$} & 0.10 &  0.66 & 0.72 \\
 &  & 0.20 &  0.52 & 0.84 \\
 &  & 0.30 &  0.25 & 0.89 \\
 &  & 0.40 &  \textbf{0.04} & 0.89 \\
 &  & 0.50 &  0.41 & 0.82 \\
 &  & 0.60 &  1.19 & 0.85 \\
\hhline{=====}
\multirow{6}{*}{HUGO} & \multirow{6}{*}{$0.40$} & 0.10 &  0.79 & 0.57 \\
 &  & 0.20 &  1.05 & 0.57 \\
 &  & 0.30 &  0.67 & 0.77 \\
 &  & 0.40 &  \textbf{0.24} & 0.81 \\
 &  & 0.50 &  0.37 & 0.76 \\
 &  & 0.60 &  0.48 & 0.63 \\
\hhline{=====}
\multirow{6}{*}{WOW} & \multirow{6}{*}{$0.40$} & 0.10  &  0.86 & 0.53 \\
 &  & 0.20  &  0.80 & 0.59 \\
 &  & 0.30  &  0.69 & 0.73 \\
 &  & 0.40  &  \textbf{0.40} & 0.76 \\
 &  & 0.50  &  0.45 & 0.71 \\
 &  & 0.60  &  0.58 & 0.62 \\
\hhline{=====}
\end{tabu}
\end{center}
\caption{Classification results of different experiments using the BOSS database embedded with LSBM, HUGO and WOW with an unknown message length}
\label{tab:boss_unknown}
\end{table}

\end{document}